\newtheorem{theorem}{Theorem}[section]
\newtheorem{definition}[theorem]{Definition}
\newtheorem{corollary}[theorem]{Corollary}
\newtheorem{lemma}[theorem]{Lemma}
\newtheorem{example}{Example}
\title{Truthful Auctions for Automated Bidding in Online Advertising}
\author{ Yidan Xing$^1$ \and
Zhilin Zhang$^2$\and
Zhenzhe Zheng$^{1,}$\thanks{Zhenzhe Zheng is the corresponding author.}\and
Chuan Yu$^2$ \and \\
Jian Xu$^2$ \and
Fan Wu$^1$ \And
Guihai Chen$^1$\\
\affiliations
$^1$Department of Computer Science and Engineering, Shanghai Jiao Tong University\\
$^2$Alibaba Group\\
\emails
$\{$katexing, zhengzhenzhe$\}$@sjtu.edu.cn,
$\{$zhangzhilin.pt, yuchuan.yc, xiyu.xj$\}$@alibaba-inc.com,\\
$\{$fwu, gchen$\}$@cs.sjtu.edu.cn
}
\begin{document}

\maketitle

\begin{abstract}
	Automated bidding, an emerging intelligent decision making paradigm powered by machine learning, has become popular in online advertising. Advertisers in automated bidding evaluate the cumulative utilities and have private financial constraints over multiple ad auctions in a long-term period. Based on these distinct features, we consider a new ad auction model for automated bidding: the values of advertisers are public while the financial constraints, such as budget and return on investment (ROI) rate, are private types. We derive the truthfulness conditions with respect to private constraints for this multi-dimensional setting, and demonstrate any feasible allocation rule could be equivalently reduced to a series of non-decreasing functions on budget. However, the resulted allocation mapped from these non-decreasing functions generally follows an irregular shape, making it difficult to obtain a closed-form expression for the auction objective. To overcome this design difficulty, we propose a family of truthful automated bidding auction with personalized rank scores, similar to the Generalized Second-Price (GSP) auction. The intuition behind our design is to leverage personalized rank scores as the criteria to allocate items, and compute a critical ROI to transform the constraints on budget to the same dimension as ROI. The experimental results demonstrate that the proposed auction mechanism outperforms the widely used ad auctions, such as first-price auction and second-price auction, in various automated bidding environments.
\end{abstract}

\section{Introduction}
    With the success of machine learning in online advertising \cite{zhang2014optimal,gharibshah2021user}, advertisers turned to adopting automated bidding (\emph{auto-bidding}) tools instead of bidding manually, bringing significant changes to the interaction between advertisers and online platforms \cite{web:google,web:facebook}.
    In auto-bidding services, advertisers submit their high-level optimization objectives and constraints to the platform, and then the bidding agents, powered by machine learning algorithms, make detailed bidding decisions in each of ad auctions on behalf of the advertisers. 
	With the help of automated bidding tools, advertisers can optimize their overall advertising objectives with respect to their financial constraints in a high-level way. 	
	\par Under automated bidding, we revisit a fundamental problem in auction theory: whether the conventional auction model, where advertisers have private values for items (\emph{i.e.}, ad impressions) and conduct corresponding strategic bidding for each single auction, is still appropriate for the new advertising paradigm.
	As the platform can access historical data about the interactions between advertisers and users, we can estimate the potential actions of users (such as clicks and conversions), which can be regarded as public values of items for advertisers.
	In auto-bidding, the private information from advertisers are actually their constraints for the whole advertising campaign. These features require a new ad auction model to incentivize advertisers to truthfully reveal the high-level private constraints given the values of items are public.  
	
	\par In this work, we consider a new automated bidding auction model, where advertisers submit budget as well as return on investment (ROI) requirement as their (private) constraints, and aim to maximize the cumulative values of winning impressions from multiple auctions during a certain period.
	We analyse the truthfulness conditions with respect to the private constraints of budget and ROI. 
	Remarkably, we show that any truthful auction mechanism for this multi-dimensional setting could be equivalently represented by a series of non-decreasing functions with budget as input. 
	When these non-decreasing functions are realized to derive the corresponding auction mechanism, the truthful conditions of budget and ROI introduce a new value grouping phenomenon: different budget-ROI types are grouped to share the same cumulative value, and the grouping pattern is determined by a threshold ROI function (transformed from the above non-decreasing function). As the threshold ROI functions are not constrained in monotonicity, the grouping shape of budget-ROI types is generally irregular, making it difficult to obtain the closed-form expression of grouping types and then the auction optimization objectives, such as revenue and social welfare.
	\par Facing these design difficulties, we propose a family of ad auctions with personalized rank scores to optimize various design objectives. Our auction adopts rank scores as the criteria to determine item allocations, which shares the similar ideas with Generalized Second-Price (GSP) auction~\cite{edelman2007internet}. 
	In guaranteeing truthfulness for the private constraints, we design \emph{critical ROI} to be the largest ROI that can win the most items without breaking the budget constraint.
	It equivalently transforms budget to the same dimension as ROI, thus allowing us to find a tight constraint limits the bidder from getting extra utilities and utilize this tight constraint to prevent misreporting. 
	We conduct extensive experiments to evaluate the performances of the proposed auction mechanism under various auto-bidding settings. The evaluation results demonstrate that the designed truthful auction can generally achieve more than $90$\% 
	performance (in terms of revenue and social welfare) of the optimal baselines without the consideration of truthfulness.
	The main contributions can be summarized as follows:
	
	\noindent $\bullet$ We consider the unique features within the interaction between advertisers and online platforms in the context of automated bidding. Based on these features, we formulate a new auto-bidding auction model, where value-maximizing bidders have high-level constraints as private information and the values of items are public. \\
	\noindent $\bullet$ We investigate truthfulness conditions of two-dimensional private constraints, budget and ROI, under public value setting. We provide full characterizations for the feasible space of allocation and payment for truthful auctions.\\
	\noindent $\bullet$ 
	Based on the derived truthfulness conditions, we design a family of truthful ad auction mechanisms for automated bidding. With the newly designed rank score functions, the proposed ad auction is simple and flexible to be adapted into various auto-bidding settings with different optimization goals such as revenue and social welfare. \\
	\noindent $\bullet$ We evaluate our proposed auction mechanisms with various experimental settings, and the empirical results validate the effectiveness of the proposed auctions regarding its performances in terms of social welfare and revenue.


	\section{Preliminaries} \label{sec:pre}
	In this section, we first motivate the considered auction design problem by the online advertising system with automated bidding services, and afterwards propose the formal auction model based on the features of automated bidding.
	\subsection{Online Advertising System}

	The working process of the online advertising auction system is illustrated in Figure \ref{fig:ad-system}. From an advertiser's perspective, it can be described as follows:
	\par 1) The advertiser sets the bidding configuration in auto-bidding interface: chooses optimization objectives (\emph{e.g.}, maximizing clicks or conversions) and sets cost constraints (\emph{e.g.}, budget per day, targeted return on investment (ROI) and maximum cost per click). The advertiser requires her realized ROI, defined as the ratio of her gained value and payment, to be higher than her targeted ROI.
	\par 2) Based on the advertiser's configuration, an auto-bidding agent represents the advertisers to make bid decisions in multiple auctions. When each user impression comes, the auto-bidding agents attend an ad auction to compete for the ad display opportunities. 
	\par 3) The auto-bidding agent adopts data-driven algorithms to predict the value (click through rate or conversion rate) of the incoming user impression, and bid for each impression while taking the cost constraints into consideration. 
	\par 4) During the multiple auctions, the advertiser can check the cumulative auction outcomes, including spent budget, received impressions, clicks, conversions and the average costs. The advertisers can further adjust their auto-bidding settings in the interface to achieve their own advertising objectives.
	\par From the above interaction between advertisers and the platform/auctioneer, we summarize three new features about the online advertising with auto-bidding services. First, advertisers only report \emph{high-level optimization objectives and constraints} in the bidding configurations for multiple auctions, but not the fine-grained bid for each auction. Second, advertisers only evaluate \emph{cumulative long-term performances and costs} of multiple auctions, instead of the outcome of each individual auction. Third, since the online platform can access all the data produced in advertising, it is reasonable to claim \emph{the monetary value} of an incoming impression to a specific advertiser can be calculated exactly. As the behavior patterns of advertisers and the online platform change distinctly in automated bidding, we need to investigate the mechanism design whose formats align with these new features.
	
	\begin{figure}[t]
		\centering
		\includegraphics[width=0.75\linewidth]{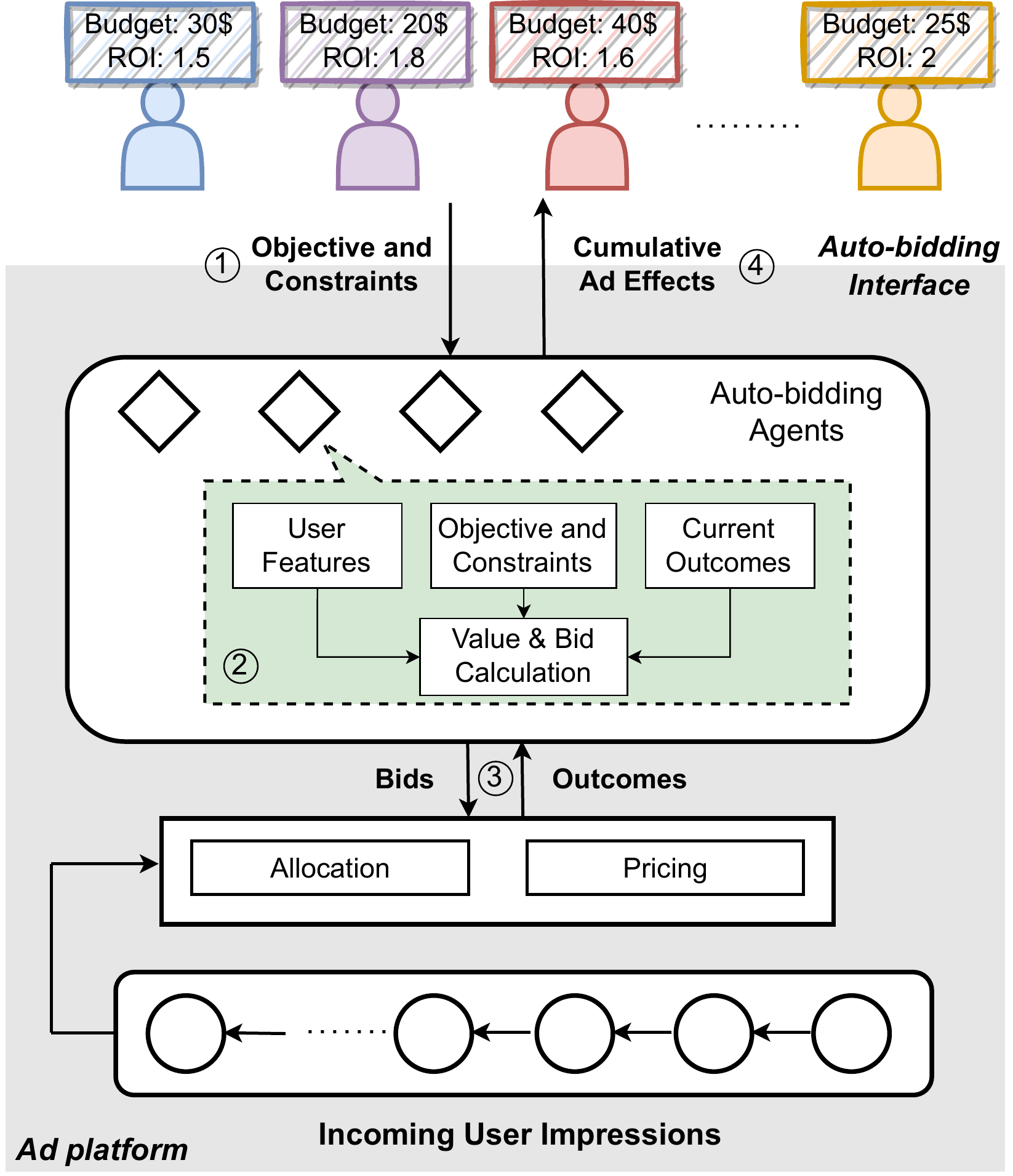}
		\caption{Ad Auction System Overview}
		\label{fig:ad-system}
	\end{figure}
	
	\vspace{-5pt}
	
	\subsection{Auction Model} \label{sec:model}
	Based on the motivations of the above online advertising system with auto-bidding, we propose the formal auction model considered in this work. There are $n$ advertisers competing for $m$ items (user impressions) coming in sequence during a time period with $m$ time slots, where only one item would appear in each time slot.\footnote{We would use advertisers with bidders, and items with impressions interchangeably throughout the work.} The advertisers are \emph{value-maximizing} bidders \cite{fadaei2016truthfulness,balseiro2022optimal,mehta2022auction}, who care about the cumulative value of her allocated impressions across all the time slots when the payment is within their financial constraints.  
	Each advertiser $i$ has budget ($B_i\geq 0$) and ROI ($R_i>0$) constraints, which are private information and are also called as type $t_i =(B_i, R_i)$ in mechanism design literature. We denote the type profile of all the advertisers as $\boldsymbol{t}=(t_i)_{i=1}^{n}$ and the space of the type profile as $\mathcal{T}=\prod_{i=1}^{n} \mathcal{T}_{i}$ with $\boldsymbol{t}\in \mathcal{T}$ and $\mathcal{T}_{i}=\mathcal{B}_i \times \mathcal{R}_i$. We denote the reported type of the other bidders except bidder $i$ as $\boldsymbol{t}_{-i}=(t_1,\ldots , t_{i-1},  t_{i+1},\ldots , t_n)$ and $\mathcal{T}_{-i}=\prod_{k\neq i}^{n} \mathcal{T}_{k}$. We assume advertisers' valuations on items are public information to the platform. We use $v_{i,j}> 0$ to represent the advertiser $i$'s valuation on item $j$.
	
	\par After collecting the budget and ROI of all the bidders, the online platform employs some auction mechanism $(\mathcal{A}, \mathcal{P})$ to decide ad allocations and payments, where $\mathcal{A}$ denote a (randomized) allocation rule $\mathcal{A}:\mathcal{T}\rightarrow [0, 1]^{n\times m} $ and $\mathcal{P}$ denote a (randomized) payment rule $\mathcal{P}:\mathcal{T}\rightarrow \mathbb{R}^n$. Specifically, for a reported type profile $\boldsymbol{t}^\prime \in \mathcal{T}$, the probability of bidder $i$ being allocated item $j$ is denoted by $a_{i,j}(\boldsymbol{t}^\prime)$ and the expected payment of bidder $i$ is denoted by $p_i(\boldsymbol{t}^\prime)$.  For any item $j\in [m]$, the allocation constraint is $\sum _{i=1} ^{n}  a_{i,j}(\boldsymbol{t}^\prime)\leq 1$. Bidder $i$'s cumulative value in these auctions is 
    \vspace{-8pt}
 
    $$v_i(\boldsymbol{t}^{\prime})=\sum _{j=1} ^{m} v_{i, j} a_{i,j}(\boldsymbol{t}^\prime)$$ and her realized ROI is $\operatorname{ROI}_i(\boldsymbol{t}'):= v_i(\boldsymbol{t}')/p_i(\boldsymbol{t}')$ (considered as $+\infty$ if $p_i(\boldsymbol{t}')=0$). 
	
	\par The utility for a bidder with budget and ROI constraints and true type $t_i$ when reporting $t_i'$ is defined as
        \vspace{-10pt}
 
	\begin{equation*}
		u _{i}\left(t_{i}, \boldsymbol{t}'\right)=\left\{\begin{aligned}
			v _i(\boldsymbol{t}'), & \text { if } p_i(\boldsymbol{t}') \leq B_{i} \text { and } \operatorname{ROI}_i(\boldsymbol{t}') \geq R_i, \\
			-\infty, & \text {  otherwise, }
		\end{aligned}\right.
	\end{equation*}
	where the type profile $\boldsymbol{t}'=(t_i', \boldsymbol{t}_{-i}')$.
	\par In this work, we focus on designing \emph{truthful} auction mechanisms satisfying incentive compatible (IC) and individual rationality (IR) conditions for budget and ROI, which is a two-dimensional mechanism design problem. Various multi-dimensional mechanism design problems are shown to be difficult in both analytical and computational aspects \cite{pavlov2011optimal,chen2014complexity,daskalakis2015multi}.
	In particular, we consider \textit{dominant-strategy incentive compatible} (DSIC) and \textit{individual rational} (IR) direct-revelation mechanisms.
	\begin{definition}\label{def:DSIC}
		An auction mechanism is dominant strategy incentive compatible (DSIC) if $\forall i \in [n]$, $t_i, t_i'\in \mathcal{T}_i, \boldsymbol{t}_{-i}\in \mathcal{T}_{-i}$: $u_i(t_i, (t_i, \boldsymbol{t}_{-i}))\geq u_i(t_i, (t_i', \boldsymbol{t}_{-i}))$.  
	\end{definition}
	
	\begin{definition}\label{def:IR}
		An auction mechanism is individual rational (IR) if $\forall \boldsymbol{t}\in \mathcal{T}$, $i \in [n]$: $p_i(\boldsymbol{t})\leq B_i$ and $\operatorname{ROI}_i(\boldsymbol{t})\geq R_i$.
	\end{definition}
	
	\par The online platform typically has some objectives to maximize. Two common design objectives are \emph{revenue} and \emph{social welfare}. Revenue is defined as the sum of payment from bidders, \emph{i.e.}, $\sum _i p_i(\boldsymbol{t})$. For social welfare, we need a synonymous metric, \emph{liquid welfare}, defined as the maximum revenue that can be extracted from bidders without breaking IR constraints, to incorporate the existence of financial constraints \cite{azar2017liquid,aggarwal2019autobidding}. In our context, liquid welfare can be defined as 
	\begin{equation} \label{eqn:liquid-welfare}
	\operatorname{LW} = \sum_{i \in [n],u_i(t_i, \boldsymbol{t}')\geq 0} \min \left( \frac{v_i(\boldsymbol{t}')}{R_i}, B_i\right).
	\end{equation}

	\section{Characterization of Truthfulness}\label{sec:IC}
        \subsection{Conditions for Truthfulness}
	In this section, we investigate the	truthfulness conditions for this multi-dimensional mechanism design problem. The detailed proofs of our results are presented in the appendix. 
	\par To provide an intuition about the differences between auction models with private constraint and private valuation, we start from the analysis of one private constraint. 
    To satisfy the IC property on budget, we need to guarantee the bidders for not obtaining a higher utility by reporting a smaller or larger budget. 
	As reporting a smaller budget will not lead a bidder to break her original budget constraint, the gained utility should decrease for reducing the budget. For the bidder misreporting a larger budget and obtaining higher values, we need to charge the bidder to break her original budget constraint, resulting in negative infinite utility to prevent misreporting.
 
	\begin{theorem}\label{thm:IC-B}
		An auction mechanism is DSIC on budget $B$ only if $\forall i \in [n]$, $R\in \mathcal{R}_i$, $\boldsymbol{t}_{-i}\in \mathcal{T}_{-i}$: \\(1) $v_i\left((B, R), \boldsymbol{t}_{-i}\right)$ is non-decreasing in $B$; \\(2) If $v_i\left((B', R), \boldsymbol{t}_{-i}\right)> v_i\left((B, R), \boldsymbol{t}_{-i}\right)$ for $B'>B$, then 
        \vspace{-6pt}
  
        $$p_i\left((B', R), \boldsymbol{t}_{-i}\right)>B.$$
	\end{theorem}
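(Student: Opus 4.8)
The plan is to prove both necessary conditions by a direct deviation argument: fix an arbitrary bidder $i$, ROI value $R\in\mathcal{R}_i$, and competitor profile $\boldsymbol{t}_{-i}$, and consider only misreports that perturb the budget coordinate while keeping the reported ROI equal to the true $R$ (this is exactly what ``DSIC on budget'' constrains). The one recurring technical point, which I would dispatch up front, is that a bidder's utility equals either her value or $-\infty$; so to convert a DSIC inequality into a statement about $v_i$ I must first argue that the reports in question yield \emph{finite} utility, and for this I lean on the IR condition (Definition~\ref{def:IR}) applied to the \emph{reported} type, consistent with the truthful (DSIC and IR) setting the paper works in. The useful consequence is that a report with ROI coordinate $R$ always satisfies the true bidder's ROI feasibility constraint, so only the payment-versus-budget condition is ever in play.

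For part (1), I would take $B_1<B_2$ and consider a bidder whose true type is $(B_2,R)$ contemplating the report $(B_1,R)$. Applying IR to the reported type $(B_1,R)$ gives $p_i((B_1,R),\boldsymbol{t}_{-i})\le B_1<B_2$ and $\operatorname{ROI}_i((B_1,R),\boldsymbol{t}_{-i})\ge R$, so both feasibility constraints of the true type $(B_2,R)$ hold and this deviation yields utility exactly $v_i((B_1,R),\boldsymbol{t}_{-i})$; likewise truthful reporting yields $v_i((B_2,R),\boldsymbol{t}_{-i})$, finite by IR. DSIC (Definition~\ref{def:DSIC}) then forces $v_i((B_2,R),\boldsymbol{t}_{-i})\ge v_i((B_1,R),\boldsymbol{t}_{-i})$, which is exactly monotonicity of $v_i$ in $B$ since $B_1<B_2$ were arbitrary.

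For part (2), assume $v_i((B',R),\boldsymbol{t}_{-i})>v_i((B,R),\boldsymbol{t}_{-i})$ with $B'>B$, and consider a bidder with true type $(B,R)$ contemplating the report $(B',R)$. IR on the reported type $(B',R)$ again gives $\operatorname{ROI}_i((B',R),\boldsymbol{t}_{-i})\ge R$, so the true ROI constraint is met, and the \emph{only} way this deviation can be infeasible for the true type is $p_i((B',R),\boldsymbol{t}_{-i})>B$. If instead $p_i((B',R),\boldsymbol{t}_{-i})\le B$, the deviation would yield $v_i((B',R),\boldsymbol{t}_{-i})>v_i((B,R),\boldsymbol{t}_{-i})$, strictly exceeding the truthful utility $v_i((B,R),\boldsymbol{t}_{-i})$ and contradicting DSIC; hence $p_i((B',R),\boldsymbol{t}_{-i})>B$, as claimed.

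The main obstacle, such as it is, is bookkeeping around the $-\infty$ branch of the utility function: one must carefully keep the constraint thresholds of the \emph{true} type separate from those of the \emph{reported} type, and note that because the ROI coordinate is never perturbed, IR on the reported type automatically discharges the ROI feasibility requirement, leaving the budget/payment condition as the sole driver of both parts. Once that is in place, each part is a one-line comparison of two utilities.
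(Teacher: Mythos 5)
Your proposal is correct and follows essentially the same route as the paper, which only sketches this argument in the main text (reporting a smaller budget cannot violate the true constraints, so DSIC forces monotonicity; a profitable higher-budget deviation must be blocked by a payment exceeding the true budget) and omits a formal proof. Your explicit use of IR on the reported profile to rule out the $-\infty$ branch is the right way to make that sketch rigorous and is consistent with the paper's convention that ``truthful'' means DSIC together with IR.
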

	
	We can form a similar statement for ROI by showing that there is no incentive for a bidder to misreport her ROI.
	
	\begin{theorem}\label{thm:IC-R}
		An auction mechanism is DSIC on ROI $R$ only if $\forall i \in [n]$, $B\in \mathcal{B}_i$, $\boldsymbol{t}_{-i}\in \mathcal{T}_{-i}$:\\(1) $v_i\left((B, R), \boldsymbol{t}_{-i}\right)$ is non-increasing in $R$; \\(2) If $v_i\left((B, R'), \boldsymbol{t}_{-i}\right)> v_i\left((B, R), \boldsymbol{t}_{-i}\right)$ for $R'<R$, then
        \vspace{-6pt}
  
        $$\operatorname{ROI}_i\left((B, R'), \boldsymbol{t}_{-i}\right)<R.$$
	\end{theorem}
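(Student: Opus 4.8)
The plan is to adapt, almost verbatim, the deviation arguments behind Theorem~\ref{thm:IC-B}, using the structural symmetry between the two financial constraints appearing in $u_i$: the budget imposes $p_i \le B_i$ and the ROI imposes $\operatorname{ROI}_i \ge R_i$, so ROI plays the role of budget except that a \emph{larger} reported ROI is the \emph{more} stringent requirement (the direction is flipped relative to budget). Fix a bidder $i$, a budget $B\in\mathcal{B}_i$, and a competitor profile $\boldsymbol{t}_{-i}\in\mathcal{T}_{-i}$. Since we restrict attention to mechanisms that are also IR (Definition~\ref{def:IR}), reporting any type $(B,R)$ truthfully satisfies $p_i((B,R),\boldsymbol{t}_{-i})\le B$ and $\operatorname{ROI}_i((B,R),\boldsymbol{t}_{-i})\ge R$; in particular the truthful utility equals the finite value $v_i((B,R),\boldsymbol{t}_{-i})$, which is what lets DSIC bite.

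For statement (1), I would take $R_1<R_2$ in $\mathcal{R}_i$ and examine the true type $(B,R_1)$ deviating to the report $(B,R_2)$. The key observation is that this deviation is \emph{feasible} for the true type: IR applied to $(B,R_2)$ gives $p_i((B,R_2),\boldsymbol{t}_{-i})\le B$ and $\operatorname{ROI}_i((B,R_2),\boldsymbol{t}_{-i})\ge R_2>R_1$, so neither true constraint of $(B,R_1)$ is violated and the deviation yields utility exactly $v_i((B,R_2),\boldsymbol{t}_{-i})$. DSIC for the true type $(B,R_1)$ then forces $v_i((B,R_1),\boldsymbol{t}_{-i})\ge v_i((B,R_2),\boldsymbol{t}_{-i})$, i.e.\ $v_i((B,R),\boldsymbol{t}_{-i})$ is non-increasing in $R$.

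For statement (2), suppose $v_i((B,R'),\boldsymbol{t}_{-i})>v_i((B,R),\boldsymbol{t}_{-i})$ for some $R'<R$, and consider the true type $(B,R)$ contemplating the report $(B,R')$. If that deviation were feasible for $(B,R)$, its utility would be $v_i((B,R'),\boldsymbol{t}_{-i})$, strictly exceeding the truthful utility $v_i((B,R),\boldsymbol{t}_{-i})$ and contradicting DSIC; hence the deviation must break one of the true constraints, i.e.\ $p_i((B,R'),\boldsymbol{t}_{-i})>B$ or $\operatorname{ROI}_i((B,R'),\boldsymbol{t}_{-i})<R$. But IR applied to the reported type $(B,R')$ rules out $p_i((B,R'),\boldsymbol{t}_{-i})>B$, leaving $\operatorname{ROI}_i((B,R'),\boldsymbol{t}_{-i})<R$.

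I do not anticipate a computational obstacle — each step is a one-line feasibility check — so the only real care is needed around the $-\infty$ branch of $u_i$: (i) one must invoke IR to know that the reports entering each deviation, and the truthful report itself, produce \emph{finite} utility, since otherwise DSIC is vacuous; and (ii) in (2) one must notice that a deviation can fail in two distinct ways (budget or ROI), and that it is precisely IR on the \emph{reported} type that eliminates the budget failure and isolates the ROI failure. Keeping the inequality directions for ROI straight (opposite to those for budget in Theorem~\ref{thm:IC-B}) is the remaining bookkeeping point.
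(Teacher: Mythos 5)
Your proof is correct and follows essentially the same route as the paper, which proves this result only via the informal deviation argument in the main text (mirroring the budget case of Theorem~\ref{thm:IC-B}): reporting a larger ROI is always feasible for the true type, forcing monotonicity, while a profitable downward misreport must violate a true constraint, and IR on the reported type isolates that violation to the ROI constraint. Your explicit attention to the $-\infty$ branch and to why the budget violation is excluded is exactly the bookkeeping the paper leaves implicit.
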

	
	\par \noindent Although the monotone properties required by conditions (1) in Theorem \ref{thm:IC-B} and \ref{thm:IC-R} seem similar to the truthfulness conditions for quasi-linear bidder with private valuations, \emph{e.g.}, Myerson's Lemma~\cite{myerson1981optimal}, conditions (2) reveal their basic differences: assigning a payment to break at least one constraint (lead to negative infinite utility) is indispensable to prevent misreporting for private constraints.
 \par Theorem \ref{thm:IC-B} and \ref{thm:IC-R} naturally provide necessary conditions on the DSIC of two-dimensional type $(B, R)$. The following result shows that these conditions 
	also provide the sufficient conditions. That is, if a bidder cannot obtain higher utility through misreporting one of her constraints, misreporting the two constraints can also not obtain higher utility.
	
	\begin{theorem}\label{thm:IC-BR} 
		An auction mechanism is DSIC on both budget and ROI if and only if it satisfies Theorem \ref{thm:IC-B} and \ref{thm:IC-R}.
	\end{theorem}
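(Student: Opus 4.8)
The plan is to fix a bidder $i$ and the others' reports $\boldsymbol{t}_{-i}$ throughout, suppress both from the notation (writing $v(B,R)$, $p(B,R)$, $\operatorname{ROI}(B,R)$ for bidder $i$'s value, payment, and realized ROI when she reports $(B,R)$), and to use that the mechanisms we consider are individually rational, so the truthful report always yields the finite utility $v(B,R)$. The ``only if'' direction is then immediate: a deviation that changes only the reported budget is a special case of the deviations quantified in Definition~\ref{def:DSIC}, so DSIC on both budget and ROI implies DSIC on budget, whence Theorem~\ref{thm:IC-B} gives its two conditions; symmetrically, restricting to ROI-only deviations and invoking Theorem~\ref{thm:IC-R} gives the remaining two.

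For the ``if'' direction I would fix a true type $(B,R)$ and an arbitrary report $(B',R')$ and show $u_i(t_i,((B',R'),\boldsymbol{t}_{-i}))\le v(B,R)$. If $p(B',R')>B$ or $\operatorname{ROI}(B',R')<R$ the report yields $-\infty$ and we are done, so assume $p(B',R')\le B$ and $\operatorname{ROI}(B',R')\ge R$; it then suffices to prove $v(B',R')\le v(B,R)$, which I attack by contradiction, assuming $v(B',R')>v(B,R)$ and splitting on the sign of $R'-R$. When $R'\ge R$: Theorem~\ref{thm:IC-R}(1) gives $v(B,R')\le v(B,R)<v(B',R')$, so the value strictly increases as the budget moves from $B$ to $B'$ at the fixed ROI level $R'$; then Theorem~\ref{thm:IC-B}(1) forces $B'>B$ and Theorem~\ref{thm:IC-B}(2) forces $p(B',R')>B$, contradicting feasibility. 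This case is routine monotonicity chasing.

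The case $R'<R$ carries the real weight. First, applying Theorem~\ref{thm:IC-R}(2) to $(B',R')$ versus $(B',R)$ (legal since $R'<R$): if $v(B',R')>v(B',R)$ it would give $\operatorname{ROI}(B',R')<R$, contradicting feasibility, so $v(B',R')\le v(B',R)$; hence $v(B',R)\ge v(B',R')>v(B,R)$, and Theorem~\ref{thm:IC-B}(1),(2) again yield $B'>B$ and $p(B',R)>B$. Next I would bound $v(B,R')$: by Theorem~\ref{thm:IC-B}(1) with $B'>B$ it is at most $v(B',R')$, and by Theorem~\ref{thm:IC-R}(1) with $R'<R$ it is at least $v(B,R)$. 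If $v(B',R')>v(B,R')$ strictly, Theorem~\ref{thm:IC-B}(2) gives $p(B',R')>B$, contradicting feasibility once more; so we are pushed into the configuration $v(B,R')=v(B',R')$, with the value flat along $(B,R')\to(B',R')$ yet strictly above $v(B,R)$.

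This ``value-plateau'' configuration is the obstacle I expect to be hardest. Here neither monotonicity condition acts on the edge $(B,R')\to(B',R')$, so the contradiction must be squeezed purely out of the two payment/ROI-binding conditions: Theorem~\ref{thm:IC-R}(2) on the $B$-row (valid since $v(B,R')>v(B,R)$, $R'<R$) forces $\operatorname{ROI}(B,R')<R$, i.e.\ $p(B,R')>v(B,R')/R$, while feasibility gives $\operatorname{ROI}(B',R')\ge R$, i.e.\ $p(B',R')\le v(B',R')/R=v(B,R')/R$; one must then combine these with the derived $p(B',R)>B$ and the individual-rationality bounds at the types $(B,R')$ and $(B',R')$ to reach a contradiction. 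Once this step is nailed down, both cases give $v(B',R')\le v(B,R)$, hence $u_i(t_i,(t_i,\boldsymbol{t}_{-i}))=v(B,R)\ge v(B',R')=u_i(t_i,((B',R'),\boldsymbol{t}_{-i}))$, establishing DSIC on both dimensions.
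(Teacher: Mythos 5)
Your decomposition is essentially the paper's: the only-if direction is immediate, and the if-direction is a quadrant analysis in which the hard case $B'>B$, $R'<R$ collapses to the plateau $v(B,R')=v(B',R')=v(B',R)>v(B,R)$. The paper isolates exactly this configuration as a standalone lemma (Lemma~\ref{lm:IC-BR-2} in the appendix) and resolves it with the identity $v=p\times\operatorname{ROI}$; that is the one step you left open, and it does close with the ingredients you assembled, modulo one correction. Concretely: Theorem~\ref{thm:IC-R}(2) on the $B$-column gives $\operatorname{ROI}(B,R')<R$, and IR at the type $(B,R')$ gives $p(B,R')\leq B$, so $v(B,R')=p(B,R')\cdot\operatorname{ROI}(B,R')<BR$. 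On the other side you already derived $p(B',R)>B$, and IR at the type $(B',R)$ --- not $(B',R')$ as you wrote --- gives $\operatorname{ROI}(B',R)\geq R$, hence $v(B',R)=p(B',R)\cdot\operatorname{ROI}(B',R)>BR$. Since ROI-monotonicity at budget $B'$ upgrades your $v(B',R')\leq v(B',R)$ to equality, the plateau forces $v(B,R')=v(B',R)$, and the two displayed bounds contradict each other. The feasibility bound $p(B',R')\leq v(B',R')/R$ you proposed to use instead is not sufficient on its own, because nothing in the hypotheses relates $p(B',R')$ to $p(B',R)$; the IR constraint at $(B',R)$ is what converts $p(B',R)>B$ into a lower bound on the value. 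With that substitution your argument is correct and coincides with the paper's proof, differing only in that you inline the lemma rather than stating it separately.
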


    Theorem \ref{thm:IC-BR} is proved through showing none of the misreporting may bring the bidder higher utility when conditions in Theorem \ref{thm:IC-B} and \ref{thm:IC-R} hold, which relies on the mathematical relationship between payment and ROI. In Theorem \ref{thm:IC-BR}, the truthful conditions in Theorem \ref{thm:IC-B} for budget appear to be independent with the truthful conditions in Theorem \ref{thm:IC-R} for ROI, and satisfy these conditions simultaneously could achieve the two-dimensional truthfulness.
    Nevertheless, the payment term actually appear in both the conditions (recall $\operatorname{ROI}=v/p$ in condition (2) of Theorem \ref{thm:IC-R}), \emph{i.e.}, we have to use the same payment scheme to satisfy these two sets of conditions. 
    In order to step toward the full characterization of truthfulness, we need to further analyse how payment is influenced and constrained by financial constraints and allocation. 
    \par To facilitate our discussion, we define an allocation rule $\mathcal{A}$ to be \textit{feasible} if there exists a payment rule $\mathcal{P}$ such that the mechanism $(\mathcal{A},\mathcal{P})$ is truthful. The set of feasible allocation rule is the space we can search for truthful auctions. Although there may exist multiple payment rules that constitute a truthful auction for a feasible allocation rule $\mathcal{A}$, the following theorem allows us to focus on the maximum payment rule.
	
	 \begin{theorem}\label{thm:BR-vm-payment}
		 For a feasible allocation rule $\mathcal{A}$, assigning
		 \begin{equation} \label{eqn:ic-payment}
		 	p_i\left((B_i,R_i), \boldsymbol{t}_{-i}\right)=\min \left( v_i\left((B_i,R_i), \boldsymbol{t}_{-i}\right)/R_i, B_i\right),
		 \end{equation}
		  for $\forall i\in [n], (B_i, R_i)\in \mathcal{T}_i$, $ \boldsymbol{t}_{-i}\in \mathcal{T}_{-i}$ constitutes a truthful auction mechanism.
		 \end{theorem}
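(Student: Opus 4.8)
The plan is to leverage \emph{feasibility} of $\mathcal{A}$ directly: by definition there exists a payment rule $\mathcal{P}'$ such that $(\mathcal{A},\mathcal{P}')$ is truthful, and I will show that replacing $\mathcal{P}'$ by the pointwise-larger rule $\mathcal{P}$ of \eqref{eqn:ic-payment} keeps the mechanism truthful. The starting observation is that $\mathcal{P}$ is exactly the \emph{maximal} IR payment: a payment $p_i$ satisfies $p_i\le B_i$ and $\operatorname{ROI}_i=v_i/p_i\ge R_i$ if and only if $0\le p_i\le \min(v_i/R_i,B_i)$ (using the convention $\operatorname{ROI}_i=+\infty$ at $p_i=0$). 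Hence IR of $(\mathcal{A},\mathcal{P}')$ forces $p_i'(\boldsymbol{t})\le p_i(\boldsymbol{t})=\min(v_i(\boldsymbol{t})/R_i,B_i)$ at every profile $\boldsymbol{t}$, and at the same time shows $(\mathcal{A},\mathcal{P})$ is itself IR by construction.

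For DSIC I would appeal to Theorem~\ref{thm:IC-BR} and verify the four conditions of Theorems~\ref{thm:IC-B} and~\ref{thm:IC-R} for $(\mathcal{A},\mathcal{P})$. Conditions~(1) of both theorems constrain only the cumulative value $v_i$, which is a function of $\mathcal{A}$ alone and is unchanged; since $(\mathcal{A},\mathcal{P}')$ is DSIC, the necessary conditions of Theorems~\ref{thm:IC-B} and~\ref{thm:IC-R} hold, so conditions~(1) carry over verbatim. For condition~(2) of Theorem~\ref{thm:IC-B}: whenever $v_i((B',R),\boldsymbol{t}_{-i})>v_i((B,R),\boldsymbol{t}_{-i})$ with $B'>B$, DSIC of $(\mathcal{A},\mathcal{P}')$ gives $p_i'((B',R),\boldsymbol{t}_{-i})>B$, and $p_i\ge p_i'$ yields $p_i((B',R),\boldsymbol{t}_{-i})>B$. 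For condition~(2) of Theorem~\ref{thm:IC-R}: whenever $v_i((B,R'),\boldsymbol{t}_{-i})>v_i((B,R),\boldsymbol{t}_{-i})$ with $R'<R$, DSIC of $(\mathcal{A},\mathcal{P}')$ gives $\operatorname{ROI}_i'((B,R'),\boldsymbol{t}_{-i})<R$, which in particular forces $p_i'((B,R'),\boldsymbol{t}_{-i})>0$; then from $p_i\ge p_i'>0$ the realized ROI only shrinks, so $\operatorname{ROI}_i((B,R'),\boldsymbol{t}_{-i})=v_i/p_i\le v_i/p_i'<R$. With the four conditions verified, Theorem~\ref{thm:IC-BR} delivers DSIC, and together with IR this gives truthfulness.

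The structural fact doing all the work is the direction of monotonicity: inflating the payment up to the IR ceiling can only decrease a bidder's realized ROI and can only keep an already-violated budget violated, so each deterrent inequality appearing in conditions~(2) is preserved under $p_i'\mapsto p_i$. Rather than a real obstacle, the step needing care is the degenerate bookkeeping around zero: I must note that the hypothesis $v_i((B,R'),\boldsymbol{t}_{-i})>v_i((B,R),\boldsymbol{t}_{-i})\ge 0$ already excludes $v_i((B,R'),\boldsymbol{t}_{-i})=0$, and that $\operatorname{ROI}_i'((B,R'),\boldsymbol{t}_{-i})<R<\infty$ excludes $p_i'((B,R'),\boldsymbol{t}_{-i})=0$, so the quotients and the $+\infty$ convention never invalidate the comparison $p_i\ge p_i'$. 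I would also remark that every step is uniform in $\boldsymbol{t}_{-i}$, so fixing the opponents' reports throughout costs nothing.
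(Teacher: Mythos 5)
Your proof is correct and follows essentially the same route as the paper's: both take the payment rule $\mathcal{P}'$ guaranteed by feasibility, observe that the rule in~(\ref{eqn:ic-payment}) is the pointwise-maximal IR payment so $p_i'\le p_i$ everywhere, and note that raising payments preserves conditions (2) of Theorems~\ref{thm:IC-B} and~\ref{thm:IC-R} while leaving conditions (1) (which depend only on $\mathcal{A}$) untouched. You merely spell out the monotonicity and degenerate-case bookkeeping that the paper's one-line claim ``increasing payment of a type will not break originally satisfied conditions'' leaves implicit.
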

	
	\par Theorem \ref{thm:BR-vm-payment} naturally holds because each type is charged the maximum payment under her financial constraints. If some other payment could constitute a truthful auction with $\mathcal{A}$, then improving the payment to this maximum amount would not break the originally held conditions (2) in Theorem \ref{thm:IC-B} and Theorem \ref{thm:IC-R}. By Theorem \ref{thm:BR-vm-payment}, given any feasible allocation rule $\mathcal{A}$, we could find a corresponding payment rule to constitute a truthful auction. 
    As the bidders do not evaluate detailed allocation results of each individual auction, searching for the truthful allocation rule is equal to designing cumulative value functions $v_i((B_i, R_i), \textbf{t}_{-i})$ that can be realized. Thus, we could use the corresponding cumulative value function to represent a feasible allocation rule. 
    \par Our next step is to further shrink the consideration space of comparison in conditions of truthfulness. Previous conditions in Theorem \ref{thm:IC-BR} include comparisons between any two type sharing the same budget or ROI, \emph{e.g.}, comparing any $(B, R)$ and $(B', R)$ in Theorem \ref{thm:IC-B}, which is still a large consideration space and leads to difficulties in finding the cumulative value functions. To enable analysis between ``neighbouring'' types, for any allocation rule $\mathcal{A}$ and its cumulative value function $v(\boldsymbol{t})$, we assume the limits  
    $\lim _{B\to B_i^{-}}v_i\left((B, R_i), \boldsymbol{t}_{-i}\right)\text{ and }\lim _{R\to R_i^{+}}v_i\left((B_i, R), \boldsymbol{t}_{-i}\right)$
    exist for any $\boldsymbol{t}\in \mathcal{T}, i\in [n]$. 
    Through considering these infinitely close types and substituting the payment terms in Theorem \ref{thm:IC-BR} by the payment rule (\ref{eqn:ic-payment}), conditions that constrain the payment or realized ROI of misreporting another type to break the original constraints could be converted to conditions regarding cumulative value of the type itself.
	
	\begin{theorem}\label{thm:BR-vm-vrange}
		An allocation rule $\mathcal{A}$ can derive a  truthful auction if and only if $\forall \boldsymbol{t}\in \mathcal{T}$, $i\in [n]$:\\ (1) The cumulative value $v_i\left((B, R), \boldsymbol{t}_{-i}\right)$ is non-decreasing in $B$ for $R=R_i$ and non-increasing in $R$ for $B=B_i$; \\ (2) If $v_i\left((B_i,R_i), \boldsymbol{t}_{-i}\right) > \lim _{B\to B_i^{-}}v_i\left((B, R_i), \boldsymbol{t}_{-i}\right)$, then 
        \vspace{-6pt}
        
        $$v_i\left((B_i,R_i), \boldsymbol{t}_{-i}\right) \geq B_i \times R_i;$$ 
        (3) If $v_i\left((B_i,R_i), \boldsymbol{t}_{-i}\right) > \lim _{R\to R_i^{+}}v_i\left((B_i, R), \boldsymbol{t}_{-i}\right)$, then 
        \vspace{-6pt}
        
        $$v_i\left((B_i,R_i), \boldsymbol{t}_{-i}\right) \leq B_i \times R_i.$$
	\end{theorem}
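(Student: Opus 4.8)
The plan is to collapse the whole question onto the single payment rule (\ref{eqn:ic-payment}). By Theorem~\ref{thm:BR-vm-payment}, $\mathcal{A}$ is feasible precisely when the mechanism $(\mathcal{A},\mathcal{P})$ with $\mathcal{P}$ given by (\ref{eqn:ic-payment}) is truthful, and by Theorem~\ref{thm:IC-BR} that happens iff this mechanism meets the conditions of Theorems~\ref{thm:IC-B} and~\ref{thm:IC-R}. So it suffices to show that, after substituting $p_i=\min(v_i/R_i,B_i)$ and $\operatorname{ROI}_i=v_i/p_i$ into those conditions and passing to the relevant one-sided limits of $v_i$, they become exactly conditions (1)--(3). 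Condition~(1) is already the monotonicity parts of Theorem~\ref{thm:IC-B}(1) and Theorem~\ref{thm:IC-R}(1), so the actual content is matching conditions~(2) and~(3) with Theorem~\ref{thm:IC-B}(2) and Theorem~\ref{thm:IC-R}(2).

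For the ``only if'' direction, suppose $\mathcal{A}$ is feasible, so Theorems~\ref{thm:IC-B} and~\ref{thm:IC-R} hold with payment (\ref{eqn:ic-payment}). If $v_i((B_i,R_i),\boldsymbol{t}_{-i})$ strictly exceeds its left limit at $B_i$, then monotonicity gives $v_i((B,R_i),\boldsymbol{t}_{-i})<v_i((B_i,R_i),\boldsymbol{t}_{-i})$ for every $B<B_i$, so Theorem~\ref{thm:IC-B}(2) with larger budget $B_i$ yields $p_i((B_i,R_i),\boldsymbol{t}_{-i})>B$ for all such $B$, hence $p_i((B_i,R_i),\boldsymbol{t}_{-i})\ge B_i$; since the payment is $\min(v_i((B_i,R_i),\boldsymbol{t}_{-i})/R_i,B_i)\le B_i$, the minimum must be attained at $B_i$, i.e.\ $v_i((B_i,R_i),\boldsymbol{t}_{-i})\ge B_iR_i$, which is~(2). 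Symmetrically, a strict excess of $v_i$ over its right limit in $R$ forces, via Theorem~\ref{thm:IC-R}(2) and $R\downarrow R_i$, that $\operatorname{ROI}_i((B_i,R_i),\boldsymbol{t}_{-i})\le R_i$; rewriting $\operatorname{ROI}_i=v_i/\min(v_i/R_i,B_i)$ this forces the minimum to be attained at $v_i/R_i$, i.e.\ $v_i((B_i,R_i),\boldsymbol{t}_{-i})\le B_iR_i$, which is~(3).

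For the ``if'' direction, assume (1)--(3) and equip $\mathcal{A}$ with payment (\ref{eqn:ic-payment}). Individual rationality is immediate ($p_i\le B_i$, and $\operatorname{ROI}_i=v_i/p_i\ge R_i$ because $p_i\le v_i/R_i$, reading $\operatorname{ROI}_i=+\infty$ when $p_i=0$). For DSIC we verify Theorems~\ref{thm:IC-B} and~\ref{thm:IC-R} and invoke Theorem~\ref{thm:IC-BR}; the monotonicity parts are condition~(1). For Theorem~\ref{thm:IC-B}(2), given $B'>B$ with $v_i((B',R),\boldsymbol{t}_{-i})>v_i((B,R),\boldsymbol{t}_{-i})$, the target inequality $p_i((B',R),\boldsymbol{t}_{-i})>B$ reduces, using $B'>B$, to $v_i((B',R),\boldsymbol{t}_{-i})>BR$; one locates a threshold budget $\beta\in(B,B']$ at which $v_i$ strictly exceeds its left limit, applies~(2) to obtain $v_i((\beta,R),\boldsymbol{t}_{-i})\ge\beta R>BR$, and closes by monotonicity $v_i((B',R),\boldsymbol{t}_{-i})\ge v_i((\beta,R),\boldsymbol{t}_{-i})$. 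Theorem~\ref{thm:IC-R}(2) is the mirror image: through $\operatorname{ROI}_i=v_i/\min(v_i/R',B)$ it reduces to showing $v_i((B,R'),\boldsymbol{t}_{-i})<BR$ whenever the budget binds, which follows from~(3) applied at a threshold ROI $\rho^\ast\in(R',R)$ where $v_i$ exceeds its right limit, together with monotonicity in $R$.

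The main obstacle is the ``if'' direction, and within it the step of pushing the \emph{global} comparisons of Theorems~\ref{thm:IC-B}(2) and~\ref{thm:IC-R}(2) (arbitrary $B'>B$, arbitrary $R'<R$) down to the \emph{local} conditions~(2)--(3), which only compare a type with its own one-sided limit. One must argue that any genuine increase of $v_i$ between two separated budgets is witnessed at some intermediate budget where $v_i$ overtakes its left limit, so that the assumed existence of one-sided limits and monotonicity can propagate~(2) outward; the delicate points are the behaviour at the endpoints of $\mathcal{B}_i$ and $\mathcal{R}_i$ and the case of a jump whose size is approached but not attained, and the fact that the budget and ROI arguments are coupled, since both conditions are expressed through the one shared payment $\min(v_i/R_i,B_i)$, so the two monotone reductions have to be carried out consistently with each other.
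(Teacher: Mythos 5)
Your overall route coincides with the paper's: reduce feasibility to the conditions of Theorems~\ref{thm:IC-B} and~\ref{thm:IC-R} via Theorem~\ref{thm:IC-BR}, fix the payment to be the maximal one from Theorem~\ref{thm:BR-vm-payment}, and translate. Your ``only if'' direction is correct and is essentially the paper's argument (the paper states it contrapositively: if $v_i((B_i,R_i),\boldsymbol{t}_{-i})<B_iR_i$ then any IR payment satisfies $p_i\le v_i/R_i<B_i$, contradicting Theorem~\ref{thm:IC-B}(2) applied to budgets just below $B_i$; condition (3) is symmetric).

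The gap is in the ``if'' direction, at exactly the step you yourself flag as delicate: ``one locates a threshold budget $\beta\in(B,B']$ at which $v_i$ strictly exceeds its left limit.'' Such a $\beta$ need not exist: a non-decreasing function can increase while remaining left-continuous everywhere, in which case the hypotheses of conditions (2) and (3) never fire and those conditions give you nothing. This is not a removable technicality. Take $v_i((B,R),\boldsymbol{t}_{-i})=B/2$ with $\mathcal{R}_i=[1,3]$: conditions (1)--(3) hold (every one-sided limit equals the value, so (2) and (3) are vacuous), yet a bidder with true type $(4,1)$ who reports $(8,1)$ obtains value $4>2$ while any IR payment at $(8,1)$ is at most $\min(4,8)=4$, so Theorem~\ref{thm:IC-B}(2) cannot be satisfied by any payment rule and the allocation is not feasible. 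Hence the local conditions (2)--(3) do not imply the global conditions of Theorems~\ref{thm:IC-B}(2) and~\ref{thm:IC-R}(2); what is actually needed (and what the paper's subsequent structural results, Theorems~\ref{thm:column-structure}--\ref{thm:row-structure} and Corollary~\ref{col:full-chara}, implicitly assume) is the global statement that whenever $v_i((B',R),\boldsymbol{t}_{-i})>v_i((B,R),\boldsymbol{t}_{-i})$ for $B'>B$ one has $v_i((B',R),\boldsymbol{t}_{-i})> BR$, i.e.\ that $v_i(\cdot,R)$ is constant except where it meets the curve $B\mapsto BR$. In fairness, the paper's own proof of this direction is the single sentence ``could be verified through applying Theorem 3.3 and 3.4,'' so you attempted more than the paper supplies and ran into a weakness of the statement itself rather than of your strategy; but as written your argument does not close, and it cannot be closed from conditions (1)--(3) as literally stated.
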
 

 \subsection{Structures of Feasible Allocation Rule}
 Theorem \ref{thm:BR-vm-vrange} fully characterizes the conditions of a feasible allocation rule. In this subsection, we would further exploit the structures of feasible allocation rule indicated by Theorem \ref{thm:BR-vm-vrange} to provide more instructions on truthful auction design.
 
 \par By Theorem \ref{thm:BR-vm-vrange}, the relation between cumulative value $v$ and the term $B\times R$ strictly characterizes whether this type shares the same cumulative value with its neighbouring types, which allows us to further clearly represent  the structure of $v$. Fixing the budget, with the decrease of ROI, the cumulative value assigned to the type should be increasing, notice the term $B\times R$ decreases along with $R$; however, condition (3) in Theorem \ref{thm:BR-vm-vrange} requires that if the cumulative value strictly increases in this process, then its value should not exceed $B\times R$. Thus, there must exist some threshold ROI, such that the increasing cumulative value intersects with the decreasing $B\times R$, and the cumulative value could not increase anymore. Based on the above observation, we define 
 \vspace{-12pt}
 
 $$\operatorname{thr}_i(B, \boldsymbol{t}_{-i})=\sup _{R \in \mathcal{R}_i} \left( v_i\left((B, R), \boldsymbol{t_{-i}}\right)\geq B\times R \right)$$ 
 if the considered set is non-empty and bounded, or otherwise be 0.\footnote{This set would be empty only when $v_i\left((B, R), \boldsymbol{t_{-i}}\right)=0$ for all $R\in \mathcal{R}_i$, and unbounded only when $B=0$.} It turns out that the cumulative value assigned to this threshold ROI is exactly $B\times R$. 

		\begin{figure}[t]
		\centering
		\includegraphics[width=0.8\linewidth]{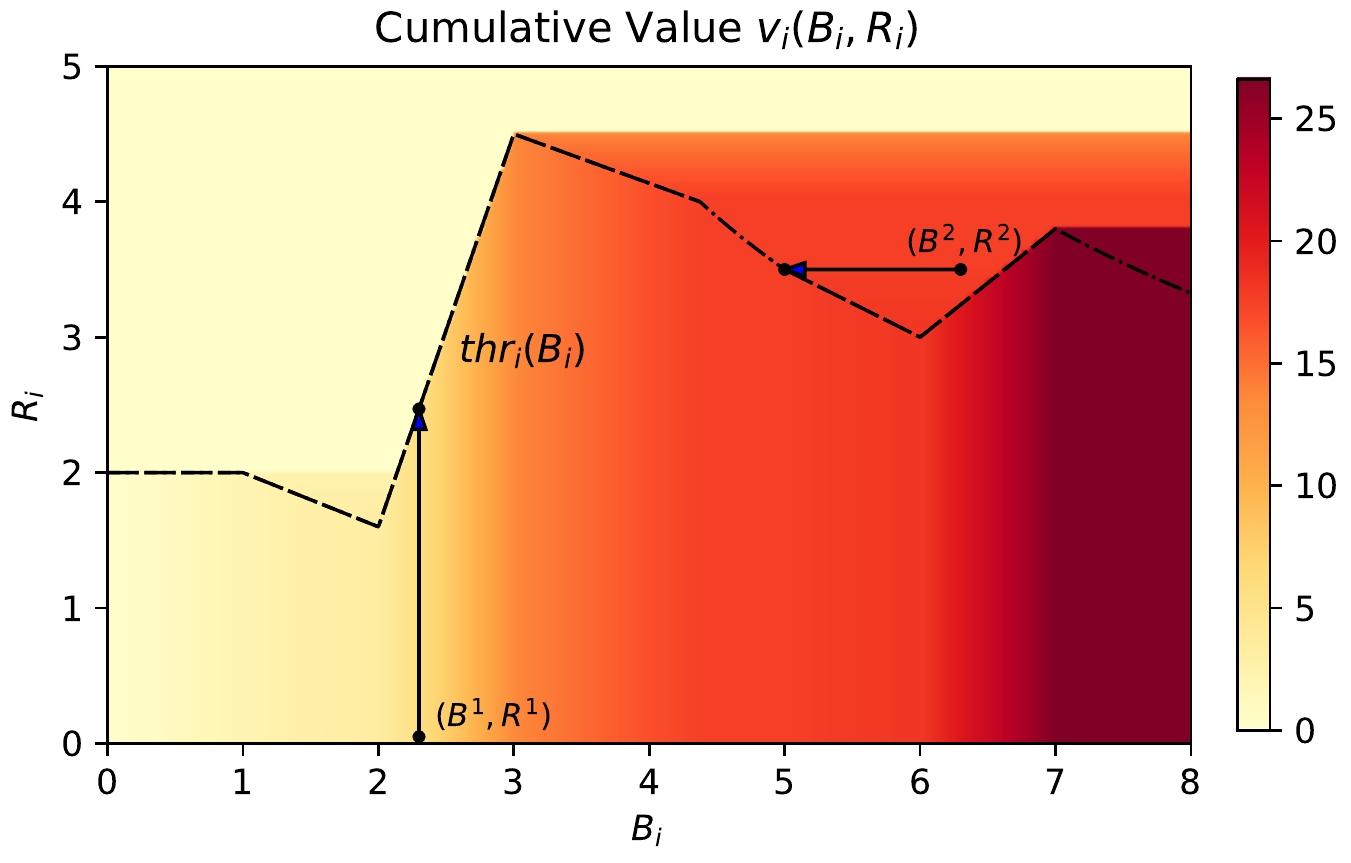}
		\caption{An example of mapping from a threshold ROI function to cumulative values fixing the others' types $\boldsymbol{t}_{-i}$.}
		\label{fig:pooling-1}
    \vspace{-8pt}
	\end{figure}

	\begin{theorem} \label{thm:column-structure}
		$\forall i\in [n]$, $\boldsymbol{t}_{-i} \in \mathcal{T}_{-i}$, $B_i\in \mathcal{B}_i$, $R_i \leq \operatorname{thr}_i(B_i, \boldsymbol{t}_{-i})$: 
    \vspace{-8pt}

  $$ v_i\left((B_i, R_i), \boldsymbol{t_{-i}}\right)= \operatorname{thr}_i(B_i, \boldsymbol{t}_{-i})B_i.$$
	\end{theorem}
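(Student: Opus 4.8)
The plan is to fix a bidder $i$, a competitor profile $\boldsymbol{t}_{-i}$, and a budget $B_i$, and to study the single-variable function $f(R) := v_i\left((B_i, R), \boldsymbol{t}_{-i}\right)$ together with $\tau := \operatorname{thr}_i(B_i, \boldsymbol{t}_{-i})$. By Theorem~\ref{thm:BR-vm-vrange}(1), $f$ is non-increasing, and it is bounded (by $\sum_j v_{i,j}$), so all one-sided limits of $f$ exist. Since $\mathcal{A}$ is feasible, Theorem~\ref{thm:BR-vm-payment} lets me fix the maximal payment $p_i\left((B_i, R), \boldsymbol{t}_{-i}\right) = \min\left(f(R)/R, B_i\right)$; a one-line computation then gives $\operatorname{ROI}_i\left((B_i, R), \boldsymbol{t}_{-i}\right) = \max\left(R, f(R)/B_i\right)$ whenever $f(R)>0$. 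As this mechanism is truthful it is DSIC on ROI, so Theorem~\ref{thm:IC-R}(2) applies; substituting the ROI expression turns it into the key inequality
\[ f(R') < B_i R \qquad \text{whenever } R' < R \text{ and } f(R') > f(R), \]
which I will refer to as $(\star)$.

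With $(\star)$ at hand I would proceed in three steps. First, a lower bound: for any $R_0 < \tau$, the definition of $\tau$ as a supremum and monotonicity of $f$ give $f(R_0) \geq f(R_1) \geq B_i R_1$ for $R_1$ in the defining set with $R_0 < R_1$; letting $R_1 \uparrow \tau$ yields $f(R_0) \geq B_i \tau$ (in particular $f(R_0) > B_i R_0$). Second, I show the supremum is attained, i.e.\ $f(\tau) \geq B_i \tau$: otherwise, for any $R_0 < \tau$ the first step gives $f(R_0) \geq B_i\tau > f(\tau)$, so $(\star)$ applied to the pair $R_0 < \tau$ forces $f(R_0) < B_i\tau$, a contradiction. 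Third, constancy: if $f$ were not constant on $(0,\tau]$ there would be $R_1 < R_2 \leq \tau$ with $f(R_1) > f(R_2)$, and then $(\star)$ gives $f(R_1) < B_i R_2 \leq B_i\tau$, contradicting $f(R_1)\geq B_i\tau$ from the first step. Hence $f \equiv v^*$ on $(0,\tau]$ for some constant $v^*$, and in particular $f(\tau) = v^*$.

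It then remains to identify $v^* = B_i\tau$. The first step already gives $v^* \geq B_i\tau$. For the reverse inequality I use that any $R > \tau$ lies outside the defining set, so $f(R) < B_i R$; since also $f(R) \leq f(\tau) = v^*$ by monotonicity, either $f(R) < v^*$ for some such $R$, in which case $(\star)$ with first argument $\tau$ gives $f(\tau) < B_i R$ and letting $R \downarrow \tau$ gives $v^* \leq B_i\tau$, or $f$ equals $v^*$ throughout a right-neighbourhood of $\tau$, in which case $v^* = f(R) < B_i R$ there and again $v^* \leq B_i\tau$. Thus $v^* = B_i\tau$, so for every $R_i \in \mathcal{R}_i$ with $R_i \leq \tau$ we obtain $v_i\left((B_i, R_i), \boldsymbol{t}_{-i}\right) = v^* = \operatorname{thr}_i(B_i, \boldsymbol{t}_{-i}) B_i$. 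The degenerate cases $B_i = 0$ and $f\equiv 0$, where $\operatorname{thr}_i = 0$ by the stated convention and the claim is vacuous because $R_i > 0$, are handled separately.

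The step I expect to be the main obstacle is constancy — excluding a strict decrease of $f$ below $\tau$. The tempting shortcut is Theorem~\ref{thm:BR-vm-vrange}(3), but that condition only controls a downward jump at a point ($f(R_i) > \lim_{R\to R_i^+} f(R)$) and is vacuous along a right-continuous decrease, so it does not suffice on its own; the argument really needs the un-reduced ROI incentive constraint of Theorem~\ref{thm:IC-R}(2) evaluated under the maximal payment, which is exactly the all-pairs inequality $(\star)$. A secondary subtlety is closing the supremum, i.e.\ showing the threshold ROI $\tau$ itself carries value $B_i\tau$ and not merely the ROIs strictly below it; this is the second and fourth steps above, and it quietly relies on $\mathcal{R}_i$ extending above $\tau$, which holds outside the degenerate cases.
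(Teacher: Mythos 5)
Your proof is correct, but it takes a genuinely different route from the paper's. The paper's own proof stays entirely inside the reduced characterization of Theorem~\ref{thm:BR-vm-vrange}: it deduces from condition (3) that $v_i(B_i,\cdot)$ is right-continuous at every $R_i<\operatorname{thr}_i(B_i,\boldsymbol{t}_{-i})$ (because there $v_i>B_iR_i$), concludes from this that $v_i(B_i,\cdot)$ is constant on $(0,\operatorname{thr}_i]$, and then pins the constant to $\operatorname{thr}_i\cdot B_i$ by a two-sided contradiction with the supremum. You instead go back to the un-reduced primitives: feasibility plus Theorem~\ref{thm:BR-vm-payment} lets you install the maximal payment, under which $\operatorname{ROI}_i=\max(R,f(R)/B_i)$, and Theorem~\ref{thm:IC-R}(2) then becomes the global pairwise inequality $(\star)$, from which constancy below the threshold and the squeeze $v^*=B_i\tau$ both follow directly. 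Your side remark is the substantive difference: condition (3) of Theorem~\ref{thm:BR-vm-vrange} only forbids downward \emph{jumps} where $v_i>B_iR_i$, so pointwise right-continuity is all it yields, and a non-increasing function that is right-continuous on an interval need not be constant there (a continuous strict decrease such as $f(R)=B_i(2\tau-R)$ triggers neither jump condition). The paper's step ``indicating $v_i(B_i,R_i)=\lim_{R\to R_i^+}v_i(B_i,R)$; as a result, $v_i(B_i,R_i)=v_i(B_i,\operatorname{thr}_i(B_i))$'' makes exactly the leap you flag, without justification. So what your route buys is a complete argument that closes this hole, at the cost of leaning on the full DSIC condition rather than the cleaner reduced one; what the paper's route would buy, were the leap justified, is a proof that uses only the local characterization. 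Two minor points to tidy up: in the final squeeze, the case split is unnecessary, since for every $R>\tau$ one has either $f(R)<v^*$ (then $(\star)$ gives $v^*<B_iR$) or $f(R)=v^*<B_iR$, so $v^*<B_iR$ holds pointwise and $R\downarrow\tau$ finishes it; and the reliance on $\mathcal{R}_i$ extending above $\tau$ is covered by the paper's implicit assumption that $\mathcal{R}_i$ is unbounded (its footnote on when the defining set is unbounded presumes this).
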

	
    We could then combine different budgets together to obtain the complete characterization of feasible allocation in the two-dimensional type space. Since all types above the threshold ROI has cumulative value $v<B\times R$ by the definition of $\operatorname{thr}$ function, they could not satisfy the condition (2) in Theorem \ref{thm:BR-vm-vrange}, and thus are forced to have the same cumulative value with their neighbouring types with a smaller budget. 
 
	\begin{theorem} \label{thm:row-structure}
		$\forall i\in [n]$, $\boldsymbol{t}_{-i} \in \mathcal{T}_{-i}$, $B_i\in \mathcal{B}_i\backslash\{0\}$, $R_i > \operatorname{thr}_i(B_i, \boldsymbol{t}_{-i})$: 
        \vspace{-6pt}
  
        $$v_i((B_i, R_i), \boldsymbol{t}_{-i})=\lim _{B\to B_i^{-}}v_i\left((B, R_i), \boldsymbol{t}_{-i}\right).$$
	\end{theorem}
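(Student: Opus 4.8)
The plan is to derive the equality from a single pointwise consequence of the definition of $\operatorname{thr}_i$ together with the two characterizing conditions of Theorem~\ref{thm:BR-vm-vrange}: monotonicity in budget supplies one inequality, and the contrapositive of condition~(2) supplies the reverse one.

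First I would establish the key fact that $R_i > \operatorname{thr}_i(B_i,\boldsymbol{t}_{-i})$ implies the strict inequality $v_i\left((B_i,R_i),\boldsymbol{t}_{-i}\right) < B_i\times R_i$. If the set $\{R\in\mathcal{R}_i : v_i\left((B_i,R),\boldsymbol{t}_{-i}\right)\geq B_i\times R\}$ is nonempty then, because $B_i\neq 0$, it is bounded (by the footnote), $\operatorname{thr}_i(B_i,\boldsymbol{t}_{-i})$ is its supremum, and any $R_i$ strictly exceeding that supremum cannot lie in the set, which is exactly the claimed strict inequality. If the set is empty, then $\operatorname{thr}_i(B_i,\boldsymbol{t}_{-i})=0$ by convention and emptiness already means $v_i\left((B_i,R),\boldsymbol{t}_{-i}\right) < B_i\times R$ for every $R\in\mathcal{R}_i$; since $R_i>0$ this again covers the claim. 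Either way, $v_i\left((B_i,R_i),\boldsymbol{t}_{-i}\right) < B_i\times R_i$.

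Next I would feed this into condition~(2) of Theorem~\ref{thm:BR-vm-vrange} contrapositively: that condition asserts that $v_i\left((B_i,R_i),\boldsymbol{t}_{-i}\right) > \lim_{B\to B_i^{-}} v_i\left((B,R_i),\boldsymbol{t}_{-i}\right)$ would force $v_i\left((B_i,R_i),\boldsymbol{t}_{-i}\right)\geq B_i\times R_i$, which we have just ruled out; hence $v_i\left((B_i,R_i),\boldsymbol{t}_{-i}\right)\leq \lim_{B\to B_i^{-}} v_i\left((B,R_i),\boldsymbol{t}_{-i}\right)$ (the limit exists by the standing assumption, and $B_i>0$ ensures the domain contains budgets below $B_i$, so the limit is meaningful). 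For the matching lower bound I would use condition~(1): $v_i\left((B,R_i),\boldsymbol{t}_{-i}\right)$ is non-decreasing in $B$, so its left limit at $B_i$ is at most $v_i\left((B_i,R_i),\boldsymbol{t}_{-i}\right)$. Combining the two inequalities gives the stated equality.

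I do not expect a real obstacle here; the proof is short. The only places warranting care are the edge cases in the first step — confirming the $\operatorname{thr}_i$ defining set is bounded, which is precisely why the hypothesis $B_i\in\mathcal{B}_i\setminus\{0\}$ is imposed, and checking that the empty-set convention is consistent — and ensuring that the inequality coming out of the $\operatorname{thr}_i$ definition is \emph{strict}, since it must genuinely contradict the ``$\geq B_i\times R_i$'' consequent of condition~(2) rather than merely meet it with equality.
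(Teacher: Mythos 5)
Your proof is correct and follows essentially the same route as the paper, which in fact omits a formal proof of this theorem and relies on the one-sentence argument in the main text: types above the threshold ROI have $v < B\times R$ by definition of $\operatorname{thr}$, hence cannot satisfy condition~(2) of Theorem~\ref{thm:BR-vm-vrange}, and monotonicity in $B$ supplies the reverse inequality. Your version merely makes the edge cases (empty or unbounded defining set, strictness of the inequality, existence of the left limit) explicit, which the paper glosses over.
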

 
	Since the types below (Theorem~\ref{thm:column-structure}) and above (Theorem~\ref{thm:row-structure}) the ROI threshold have both been fully described, the cumulative values on the entire type space could be fully defined if $\operatorname{thr}_i(B_i,\boldsymbol{t}_{-i})$ is given and $\boldsymbol{t}_{-i}$ is fixed.  We provide an example in Figure \ref{fig:pooling-1} to illustrate these characterizations. 
 For types below the threshold ROI curve and have the same budget $B$ (\emph{i.e.}, types lie in the same vertical line), they share the same cumulative value $\operatorname{thr}_i(B)\times B$. For each type above the curve, its cumulative value is equal to that of its neighbouring type with a smaller budget until it reaches some type on the threshold ROI curve. 
 This is equivalent to search horizontally to the left to find the closest type on its corresponding ROI threshold, where all the types along this horizontal search share the same cumulative value.

    \par We have shown that each feasible two-dimensional cumulative value function (whose input is $(B, R)$) for a single bidder could be represented by a one-dimensional threshold ROI function (whose input is $B$) when other bidders' profile is fixed, but it is unknown what kinds of one-dimensional threshold ROI functions could represent a feasible cumulative value function. By condition (1) in Theorem \ref{thm:BR-vm-vrange}, the cumulative value when $R$ is infinitely close to $0$ should be increasing for different $B$, which indicates $\operatorname{thr}_i(B, \boldsymbol{t}_{-i}) \times B$ needs to be non-decreasing in $B$. 
    Since every cumulative value function mapped from a non-decreasing $\operatorname{thr}_i(B, \boldsymbol{t}_{-i}) \times B$ satisfies Theorem \ref{thm:BR-vm-vrange}, this becomes the only requirement for a feasible $\operatorname{thr}_i$ function. 
    For convenience in notations, we denote $g_i(B_i, \boldsymbol{t}_{-i}) =\operatorname{thr}_i(B_i, \boldsymbol{t}_{-i}) \times B_i$. 
We denote $\mathcal{V}$ as the space of feasible cumulative value function  when $i$ and $\boldsymbol{t}_{-i}$ are fixed, \emph{i.e.}, $v_i:T_i \to \mathbb{R}^+$, and denote $\mathcal{G}$ as the space of non-decreasing function $g: \mathbb{R}^+ \to \mathbb{R}^+$ with $g(0)=0$. 
 
	\begin{corollary} \label{col:full-chara}
		There exists a bijective map $m: \mathcal{G} \to \mathcal{V}$ with 
        \vspace{-6pt}
  
  		\begin{equation*}
			(m \circ g) (B, R)=\left\{
			\begin{aligned}
				&g(B) \text{\quad \quad \quad \quad \quad \quad \quad \quad if } R\leq \frac{g(B)}{B},& \\
				& g\left(\sup_{B'\leq B} \{R\leq \frac{g(B')}{B'}\} \right) \text{\quad otherwise,}
			\end{aligned}
			\right.
		\end{equation*}
  where we define $\sup \varnothing = 0$.
	\end{corollary}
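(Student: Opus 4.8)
\medskip
\noindent\textbf{Proof plan.}\quad
The plan is to check that the displayed formula really maps $\mathcal{G}$ into $\mathcal{V}$, and then to exhibit a two-sided inverse of $m$, which forces $m$ to be a bijection. The natural inverse is $\Phi:\mathcal{V}\to\mathcal{G}$ with $\Phi(v)(B):=\operatorname{thr}_i(B,\boldsymbol{t}_{-i})\times B$ and $\Phi(v)(0)=0$; this is precisely the function $g_i$ introduced before the corollary, and the discussion there already shows it is non-decreasing, hence lies in $\mathcal{G}$. So three things are needed: (i)~$v:=m\circ g\in\mathcal{V}$ for every $g\in\mathcal{G}$; (ii)~$\Phi\circ m=\mathrm{id}_{\mathcal{G}}$; (iii)~$m\circ\Phi=\mathrm{id}_{\mathcal{V}}$.

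For (i), I would first rewrite the formula in the uniform shape $v(B,R)=g\big(\beta(B,R)\big)$, where $\beta(B,R):=\sup\{\,B'\le B:\ g(B')\ge B'R\,\}$ (with $\sup\varnothing=0$, so $\beta(0,R)=0$ and $v(0,R)=g(0)=0$): when $R\le g(B)/B$ the budget $B$ itself lies in the set, giving $\beta(B,R)=B$ and the first branch of the formula. As $\beta(\cdot,R)$ is non-decreasing, $\beta(B,\cdot)$ non-increasing and $g$ non-decreasing, condition (1) of Theorem \ref{thm:BR-vm-vrange} is immediate. For condition (2): if $R\le g(B)/B$ then $v(B,R)=g(B)\ge B\times R$ and the conclusion holds outright; if $R>g(B)/B$ then $v(\cdot,R)$ cannot strictly exceed its left limit at $B$, since the only way it could is $\beta(B',R)\uparrow B$ as $B'\uparrow B$, which yields $g(B'_k)\ge B'_kR$ along some $B'_k\uparrow B$ and hence $g(B)\ge BR$ by monotonicity, contradicting $R>g(B)/B$. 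Condition (3) is dual: on the threshold $v(B,R)=g(B)=B\times R$; strictly below it $v(B,\cdot)$ is locally constant near $R$; and strictly above it, either $v(B,\cdot)$ is locally constant near $R$ from the right, or $\beta(B,R)$ has already landed at a budget $B^\ast\le B$ with $g(B^\ast)\le B^\ast R\le BR$. The conventions $g(0)=0$ and $\sup\varnothing=0$ dispose of the degenerate budget, an identically zero $g$, and unattained suprema.

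For (ii), read it off the formula: if $g(B)>0$ then $v(B,R)=g(B)$ for all $R<g(B)/B$, and if $g(B)=0$ then monotonicity with $g(0)=0$ forces $g\equiv0$ on $[0,B]$, whence $v(B,R)=g(0)=0=g(B)$; in either case $\operatorname{thr}_i$ of $v$ at $B$ equals $g(B)/B$, so $\Phi(m\circ g)(B)=g(B)$. For (iii), put $g=\Phi(v)$, so $g(B)/B=\operatorname{thr}_i(B,\boldsymbol{t}_{-i})$. When $R\le\operatorname{thr}_i(B,\boldsymbol{t}_{-i})$, Theorem \ref{thm:column-structure} gives $v\big((B,R),\boldsymbol{t}_{-i}\big)=\operatorname{thr}_i(B,\boldsymbol{t}_{-i})\,B=g(B)=(m\circ g)(B,R)$. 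When $R>\operatorname{thr}_i(B,\boldsymbol{t}_{-i})$, set $B^\ast:=\sup\{B'\le B:R\le\operatorname{thr}_i(B',\boldsymbol{t}_{-i})\}$; by Theorem \ref{thm:row-structure} the cumulative value is unchanged as the budget is lowered through the region above the threshold, so $v\big((B,R),\boldsymbol{t}_{-i}\big)=v\big((B^\ast,R),\boldsymbol{t}_{-i}\big)$, and Theorem \ref{thm:column-structure} at $(B^\ast,R)$ — where $R\le\operatorname{thr}_i(B^\ast,\boldsymbol{t}_{-i})$ — makes this equal to $g(B^\ast)=(m\circ g)(B,R)$; $B^\ast=0$ gives value $0$, consistent with $\sup\varnothing=0$.

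The step I expect to be the main obstacle is the ``descent'' inside (iii). Theorem \ref{thm:row-structure} only equates the value above the threshold with its one-sided limit in $B$, whereas what is genuinely needed is that the value is \emph{constant} along the whole horizontal segment from $(B,R)$ down to $(B^\ast,R)$. To obtain constancy I would invoke the budget condition of Theorem \ref{thm:IC-B}(2) applied to \emph{all} larger budgets, with the maximal payment of Theorem \ref{thm:BR-vm-payment} substituted — so that $v\big((B',R),\boldsymbol{t}_{-i}\big)\le B''R$ with $B^\ast\le B''<B'$ forces $v\big((B',R),\boldsymbol{t}_{-i}\big)=v\big((B'',R),\boldsymbol{t}_{-i}\big)$ — and combine it with monotonicity and the inequality $B^\ast\le v\big((B',R),\boldsymbol{t}_{-i}\big)/R<B'$ to conclude $v\big((B',R),\boldsymbol{t}_{-i}\big)=g(B^\ast)$ for every $B'\in(B^\ast,B]$. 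One also needs that $B^\ast$, when positive, is attained: otherwise $g(B'_k)\ge RB'_k$ along $B'_k\uparrow B^\ast$ would give $g(B^{\ast-})\ge RB^\ast>g(B^\ast)$, contradicting monotonicity of $g$. The remaining verifications are routine case analysis.
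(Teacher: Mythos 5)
Your pair of maps --- the displayed $m$ together with the inverse $\Phi(v)(B)=\operatorname{thr}_i(B,\boldsymbol{t}_{-i})\times B$ --- is exactly what the paper's proof uses, so the route is the same; the difference is that the paper verifies almost nothing (it asserts feasibility of $m\circ g$ via Theorem~\ref{thm:BR-vm-vrange}, asserts non-decreasingness of the derived $g$, and leaves the two-sided-inverse property to be read off from Theorems~\ref{thm:column-structure} and~\ref{thm:row-structure}), whereas you actually carry out the verification. The place where you add genuine content is the ``descent'' in your step (iii), and your diagnosis there is correct: Theorem~\ref{thm:row-structure} only gives left-continuity of $v(\cdot,R)$ at each budget above the threshold, and a non-decreasing left-continuous function need not be constant, so Theorems~\ref{thm:column-structure} and~\ref{thm:row-structure} alone do not force $v((B,R),\boldsymbol{t}_{-i})=g(B^\ast)$. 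A concrete witness: $v((B,R),\boldsymbol{t}_{-i})=cB$ for all $R$ satisfies conditions (1)--(3) of Theorem~\ref{thm:BR-vm-vrange} vacuously and is consistent with Theorems~\ref{thm:column-structure} and~\ref{thm:row-structure}, yet it is not of the form $m\circ g$ and is not feasible, since a bidder with true ROI $R>c$ profitably reports any budget in $(B,BR/c)$. Your repair --- invoking the global condition of Theorem~\ref{thm:IC-B}(2) with the maximal payment of Theorem~\ref{thm:BR-vm-payment} to get constancy of $v(\cdot,R)$ on $[a/R,\,B']$ with $a:=v((B',R),\boldsymbol{t}_{-i})$, then squeezing $a/R$ down to $B^\ast$ --- is the right one. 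The half-line still missing from your sketch is why the squeeze closes: whenever the ``otherwise'' branch is active with $B^\ast>0$ one has $g(B^\ast)=B^\ast R$ exactly (if $g(B^\ast)>B^\ast R$, monotonicity of $g$ would put budgets just above $B^\ast$ into the set being supremized, contradicting the definition of $B^\ast$); then $a\ge g(B^\ast)=B^\ast R$, and $a>B^\ast R$ would place the budget $a/R$ strictly between $B^\ast$ and $B'$, where constancy gives $v((a/R,R),\boldsymbol{t}_{-i})=a$ while $R>\operatorname{thr}_i(a/R,\boldsymbol{t}_{-i})$ gives $v((a/R,R),\boldsymbol{t}_{-i})<(a/R)R=a$, a contradiction. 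One consistency point: in step (i) you certify membership in $\mathcal{V}$ by checking only the limit conditions of Theorem~\ref{thm:BR-vm-vrange}; since the counterexample above shows those conditions are not by themselves sufficient for feasibility, you should (and easily can) check the global conditions of Theorems~\ref{thm:IC-B} and~\ref{thm:IC-R} for $m\circ g$ directly, exactly as you do in step (iii).
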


In other words, given any one-dimensional non-decreasing function with $g(0)=0$, we could transform it to a feasible\footnote{Our feasibility is with respective to truthfulness, and does not specify how to realize the allocations from detailed items.} cumulative value function by the above mapping process.

\section{Truthful Auction Design} \label{sec:mechanism-design}
    The derived structure of feasible allocation rules give rise to a \emph{value grouping phenomenon}. 
     When reversing the process of finding the cumulative value of a certain type given threshold ROI curve, every type on the threshold ROI curve shares the same cumulative value with two groups of types: the types vertically below it,  as well as the types locate horizontally right to it and above its corresponding threshold ROI, as shown in Figure \ref{fig:pooling-1}.
     This intricate value grouping phenomenon  are enforced by the truthful conditions instead of the optimization requirement. As we can observe in Figure \ref{fig:pooling-1}, since the horizontal search terminates when meeting a type on the threshold ROI curve, the  grouping shape is determined by the relative rank of $thr_i(B_i, \boldsymbol{t}_{-i})$ for different $B_i$. However,  though $thr_i(B_i, \boldsymbol{t}_{-i})$ is required to keep $g(B)$ non-decreasing, it is not required to be monotone itself, which leads the terminate of horizontal search and the grouping shapes to be irregular. This also brings difficulties in deriving general closed-form conditions for the group of types to share the same allocation. Furthermore, although the cumulative values of the grouping types are forced to be the same, their payment are calculated by Theorem \ref{thm:BR-vm-payment}, leading to varying payments between grouping types and the non-linear revenue with respective to the threshold ROI. 
	\par The above characteristics of value grouping phenomenon, \emph{i.e.}, irregularity in grouping shape and non-linear payment, make any analytical or mathematical programming approach difficult to adopt. To reach feasible and implementable auction design, we propose a family of simple truthful auctions based on newly designed rank score functions. 

	\begin{algorithm}[t]
		\caption{A Family of Simple Truthful Auctions}
		\label{alg:dsic-fami}
		\KwIn{Bidder's reported type $(B_i, R_i)$, and non-increasing rank score functions $f_{i,j}$.}
		\KwOut{Bidder's allocated items $A_i$ and payment $p_i$.}
		Initialize $A_i=\{\}$ for $i\in [n]$; \\
		Compute virtual bids $b_{i,j} \gets v_{i,j}\times f_{i,j}(R_i)$; \\
		\For{each item $j\in [m]$}{
			Find bidder $i_0$ with the highest virtual bid $b_{i_0, j}$; \\
			Record the second highest bid $c_{j} \gets  \max _{i\neq i_0} b_{i,j}$; \\
			Add item $j$ into $A_{i_0}$ and set  $a_{i_0, j}=1$; \\
		}
		\For{each bidder $i \in [n]$}{
			Compute ROI $r_{i,j}=f_{i,j}^{-1}(\frac{c_j}{v_{i,j}})$ required for winning the item $j\in A_i$; \\
			Find a largest $R_i^c \in \mathcal{R}_i$ s.t. $\sum _{j\in A_i} \mathbb{I}\left(R_i^c \leq r_{i,j} \right) \times \frac{v_{i,j}}{R_i^c} \geq B_i $, and let \quad \quad $d \gets $ the difference between left-hand side and right-hand side of the above inequality;\\
			\If{$d>0$} {Find the items $j'$ with $r_{i,j'}=R_i^c$, and remove the items with value $d/R_i^c$; \\
			}		
			\If{$R_i \leq R_i^c$}{
				Remove the items with $r_{i,j} < R_i^c$ from $A_i$;\\
			}
			$p_i \gets \min \left(\frac{\sum _{j\in A_i} v_{i,j}\cdot a_{i,j}}{R_i}, B_i\right);$
		}
		\textbf{return} $(A_i, p_i)$ for $i\in [n]$.
	\end{algorithm}
 
	\par The detailed auction mechanism is presented in Algorithm~\ref{alg:dsic-fami}. For each item, bidders are ranked based on their \emph{virtual bids} $b_{i,j}$, which is defined as $v_{i,j}$ multiplying a rank score $f_{i,j}(R_i)$ with some pre-defined non-increasing function $f_{i,j}$ (Line 2). The items are then temporarily allocated to each bidder with the highest virtual bids (Line 4-6). After the candidate allocation set $A_i$ for each bidder $i$ has been determined, we compute the the corresponding ROI requirement $r_{i,j}$ for the bidder $i$ to rank the first in item $j\in A_i$, \emph{i.e.}, bidder $i$ needs to report $R_i\leq r_{i,j}$ in order to maintain ranked first for item $j$ (Line 8). The critical ROI in Line 9 is our key design to guarantee truthfulness, which computes the largest ROI a bidder could report to win enough items in $A_i$ and spend out her budget. 
    Intuitively, critical ROI simulates the best-response ROI of the bidder given her budget constraint, which does not involve and thus keeps independent of her true ROI constraint. 
    We naturally utilize this critical ROI, which transforms the budget to the same dimension of ROI, as the threshold ROI function of our auction, and design the remaining parts based on the truthful conditions in Section \ref{sec:IC}. Line 10-11 are designed to guarantee the value of remaining items in $A_i$ would exactly equal to the budget of the bidder times the computed critical ROI (Theorem \ref{thm:column-structure}). Line 12-13 are designed to guarantee the types with ROI larger than the corresponding critical ROI would be allocated based on their reported ROI (Theorem \ref{thm:row-structure}) and get all items with $r_{i,j} \geq R_i$. Line 14 determines the payment for each bidder $i$ based on Theorem \ref{thm:BR-vm-payment}. Due to its low time complexity, Algorithm \ref{alg:dsic-fami} has good implementation scalability.
	
	\begin{theorem}\label{thm:dsic-fami}
		If the pre-determined functions $f_{i,j}(R)$ are non-increasing with $R$, then the auction mechanism corresponding to Algorithm \ref{alg:dsic-fami} is truthful.
	\end{theorem}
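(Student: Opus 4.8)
The plan is to verify that the mechanism in Algorithm~\ref{alg:dsic-fami} satisfies the characterization of feasible allocation rules plus a consistent payment rule, namely Theorem~\ref{thm:BR-vm-vrange} together with the payment formula~(\ref{eqn:ic-payment}) from Theorem~\ref{thm:BR-vm-payment}. Fix a bidder $i$ and the reported types $\boldsymbol{t}_{-i}$ of the others. The first step is to observe that the candidate set $A_i$ (Line 4--6), the cutoff bids $c_j$, and hence the per-item thresholds $r_{i,j}=f_{i,j}^{-1}(c_j/v_{i,j})$ (Line 8) depend only on $\boldsymbol{t}_{-i}$ and on the \emph{values} $v_{i,j}$, not on $(B_i,R_i)$: this is because $f_{i,j}$ is non-increasing, so bidder $i$ attains the top virtual bid on item $j$ exactly when $R_i\le r_{i,j}$, and the identity of the winner among the others and the runner-up bid $c_j$ are unaffected by $i$'s report as long as $i$ stays on top. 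I would then describe the \emph{critical ROI} $R_i^c$ computed in Line 9 as a function of $B_i$ alone (given $\boldsymbol{t}_{-i}$): it is the largest $R$ such that the items in $A_i$ with $r_{i,j}\ge R$, charged at rate $1/R$, have total cost at least $B_i$. The key structural claim is that $R_i^c$ plays precisely the role of $\operatorname{thr}_i(B_i,\boldsymbol{t}_{-i})$, and that the final cumulative value $v_i((B_i,R_i),\boldsymbol{t}_{-i})$ produced by the algorithm equals the mapping $m\circ g$ of Corollary~\ref{col:full-chara} with $g(B)=R_i^c(B)\cdot B$.

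The second step is the case analysis on $R_i$ versus $R_i^c$. When $R_i\le R_i^c$: Lines 12--13 keep exactly the items of $A_i$ with $r_{i,j}\ge R_i^c$, and Lines 10--11 have already trimmed (by removing value $d/R_i^c$ from the items with $r_{i,j}=R_i^c$) so that the retained value is exactly $B_i\cdot R_i^c$; hence $v_i((B_i,R_i),\boldsymbol{t}_{-i})=B_i R_i^c$, independent of $R_i$ in this region, matching Theorem~\ref{thm:column-structure} and the first branch of the mapping. The payment in Line 14 is $\min(B_iR_i^c/R_i,\,B_i)=B_i$ since $R_i\le R_i^c$, so IR on budget holds with equality and $\operatorname{ROI}_i=R_i^c\ge R_i$, giving IR on ROI. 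When $R_i>R_i^c$: Lines 12--13 do not fire, so the retained set is all of $A_i$, i.e.\ all items with $r_{i,j}\ge R_i$ (recall membership in $A_i$ is equivalent to $R_i\le r_{i,j}$), and one checks this is the same as the second branch of the mapping---the cumulative value equals that of the neighbouring smaller-budget type, because for $B'<B_i$ with $R_i^c(B')$ still $<R_i$ the retained set is identical. The payment is $\min(v_i/R_i,B_i)$; since we are above the critical ROI the budget is not exhausted, so this equals $v_i/R_i$, again giving IR on both dimensions and matching~(\ref{eqn:ic-payment}).

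The third step is to discharge the monotonicity and boundary conditions of Theorem~\ref{thm:BR-vm-vrange}. Condition~(1): fixing $R=R_i$, as $B$ grows the threshold region in Line 9 can only demand more value, so $R_i^c(B)$ and the retained value are non-decreasing in $B$; fixing $B=B_i$, as $R$ grows the retained set shrinks (in the $R>R_i^c$ regime by discarding items with $r_{i,j}<R$, and it is constant in the $R\le R_i^c$ regime), so $v_i$ is non-increasing in $R$. Conditions (2) and (3): a strict jump in $v_i$ as $B\to B_i^-$ or as $R\to R_i^+$ can occur only at the critical-ROI boundary $R_i=R_i^c(B_i)$, and there the construction in Lines 10--11 forces $v_i((B_i,R_i),\boldsymbol{t}_{-i})=B_i R_i^c=B_i R_i$, which satisfies both $v_i\ge B_iR_i$ and $v_i\le B_iR_i$ simultaneously; away from this boundary $v_i$ is locally constant in the relevant direction so the hypotheses of (2), (3) are vacuous. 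Having matched all conditions of Theorem~\ref{thm:BR-vm-vrange} and the payment of Theorem~\ref{thm:BR-vm-payment}, truthfulness follows.

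I expect the main obstacle to be the careful bookkeeping in Lines 9--11, in particular showing the retained value is \emph{exactly} $B_i R_i^c$ rather than merely $\ge B_i R_i^c$, and verifying that $R_i^c$ is well-defined (the set in Line 9 is non-empty and the supremum is attained, or the degenerate cases $B_i=0$ or all $v_{i,j}f_{i,j}$ losing are handled) and that it genuinely coincides with $\operatorname{thr}_i$; one must also confirm the discretization/tie-breaking in ``find the items $j'$ with $r_{i,j'}=R_i^c$ and remove value $d/R_i^c$'' can always be performed (possibly requiring fractional allocation of a single item), and that the resulting $v_i$ is left-continuous in $B$ as assumed before Theorem~\ref{thm:BR-vm-vrange} so that conditions (2)--(3) apply as intended.
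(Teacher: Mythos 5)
Your strategy coincides with the paper's: identify the critical ROI $R_i^c(B)$ of Line 9 with the threshold function $\operatorname{thr}_i(B,\boldsymbol{t}_{-i})$, set $g(B)=B\cdot R_i^c(B)$, and check that the algorithm's cumulative values realize the mapping of Corollary~\ref{col:full-chara} (equivalently, the conditions of Theorem~\ref{thm:BR-vm-vrange}) with the payment of Theorem~\ref{thm:BR-vm-payment}; the case split on $R_i$ versus $R_i^c$ and the roles of Lines 10--13 are also as in the paper. However, there is a concrete error at the load-bearing point of the argument: you assert that ``as $B$ grows \dots\ $R_i^c(B)$ and the retained value are non-decreasing in $B$.'' In fact $R_i^c(B)$ is \emph{non-increasing} in $B$: enlarging the budget makes the inequality in Line 9 harder to satisfy, so the largest feasible $R_i^c$ can only drop. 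What is true, and what feasibility actually requires (it is the \emph{only} requirement on a threshold function, per Corollary~\ref{col:full-chara}), is that the product $g(B)=B\cdot R_i^c(B)$ is non-decreasing even though one factor decreases. Since your stated reason for the monotonicity of the retained value is the false monotonicity of $R_i^c$, this step does not go through as written. The paper proves it by sorting the items by $r_{i,j}$ and computing $g(B)$ in closed form as an alternating sequence of linear pieces $r_i^{p}\cdot B$ and plateaus $\sum_{j\le p}v_i^j$, from which both the monotonicity of $g$ and the exact trimmed value $B\cdot R_i^c$ are read off.

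Two further points are glossed over. First, the set $A_i$ over which Line 9 ranges does depend on the reported $R_i$ (item $j$ enters $A_i$ iff $R_i\le r_{i,j}$), so treating $R_i^c$ as a function of $B_i$ alone requires an argument --- the paper's property (b) --- that the critical ROI computed from $A_i(R_i)$ coincides with the one computed from the full candidate set whenever $R_i\le R_i^c$, and remains below $R_i$ whenever $R_i>R_i^c$. Second, your claim that a strict increase of $v_i$ can occur ``only at the critical-ROI boundary $R_i=R_i^c(B_i)$,'' where $v_i=B_iR_i$ exactly, is not accurate in the $R$ direction: above the threshold, $v_i(B_i,\cdot)$ jumps at every $R_i=r_i^p$ with $r_i^p>R_i^c(B_i)$, and condition (3) of Theorem~\ref{thm:BR-vm-vrange} holds there because $\sum_{j\le p}v_i^j\le B_i r_i^p$ (a consequence of $r_i^p>R_i^c(B_i)$), not because of the boundary equality. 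These two items are repairable bookkeeping, but the monotonicity of $g$ needs an actual proof.
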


    \noindent Our proof is conducted through figuring out and verifying the corresponding $g(B)$ and threshold ROI function of Algorithm~\ref{alg:dsic-fami}. We can design personalized $f_{i,j}(R)$ for bidders and items according to the auction objective and prior knowledge before the advertising campaign, such as the type distribution of bidders and the information of incoming items. As the cost of bypassing the design hardness brought by value grouping of multiple bidders, our mechanism may not allocate all the items if the corresponding highest ranking bidder has fully consumed her budget. Nevertheless, the auctioneer could take multiple approaches to alleviate this phenomenon. While adjusting the functions $f_{i,j}(R)$ with the pre-known information is the most direct approach, dividing advertisers into groups and circulating the unsold items could also be effective. 
	
	\section{Experiments}
	In this section, we conduct experiments with synthetic data to validate the performance of our proposed auctions.
	
	\subsection{Experimental Setup}
	\par There are two sets of experiments with i.i.d. and non-i.i.d. bidders, respectively.
	We vary the number of bidders and items to simulate the changes in demand and supply. The presented results are averaged by 50 runs. We choose distribution parameters based on the common fluctuations in advertising markets, and normalize the lower bound of $v_{i,j}$. \\
	\textbf{Symmetric bidders } Bidders are symmetric with $v_{i,j}\sim U[1,4]$, $B_i\sim U[40, 80]$ and $R_i\sim U[1,3]$, where $U[a,b]$ is the uniform distribution within the range $[a,b]$.\footnote{We have tested several other parameters under uniform distribution, and the trends of results are the same.} \\ 
    	\textbf{Mixed bidders } There are low and high distributions for $v_{i,j}$, $B_i$ and $R_i$ as bidders' possible types. Specifically, $v_{i,j}\sim U[1,2]$ or $v_{i,j}\sim U[2,3]$, $B_{i}\sim U[20,40]$ or $B_{i}\sim U[80,100]$, and $R_{i}\sim U[1,2]$ or $R_i\sim U[2,3]$. 
	Combinations of these categories result in 8 groups of bidders. \\
	\noindent \textbf{Baseline Auctions }
	Since no existing mechanism guarantees the IC properties of both budget and ROI, we consider common repeated auction formats: first-price and second-price auctions. Due to their non-truthfulness (examples provided in the appendix), we involve misreporting for these auctions.\\
	$\bullet$ \emph{Repeated first-price auctions: }
    The auctioneer holds first-price auctions for every single item, where bidder $i$ bids $v_{i,j}/R_i$ for item $j$. The auctioneer allocates the item to the highest ranking bidder who has remaining budget to afford, and charge the first-price payment. We aim to simulate bidders' misreporting behaviors by the classical \emph{best-response} dynamics with the true profile as starting points. In repeated first-price auctions, the bidders have no incentive to misreport their budget. We calculate the best response ROI as the smallest ROI that wins the most items in historical auctions without breaking the original financial constraints.
	
	\noindent $\bullet$ \emph{Repeated second-price auction:} This auction format is similar to repeated first-price auction with second-price auction used in each round instead. 
	We still simulate bidders' misreporting behaviors to be the best-response dynamics as above, with payment changed to be the second price, where the best-response ROI here may be larger or smaller than the true ROI.\\ 
	$\bullet$ \emph{Non-truthful Optimal Baseline:} The optimal offline revenue could be computed through linear programming, where we set each bidder $i$'s payment to item $j$ as exactly $v_{i,j}\times a_{i,j}/R_i$, and her total payment is constrained to not exceed $B_i$. \\
	\noindent \textbf{Evaluation Metrics} We consider revenue and liquid welfare as the optimization goal. Since the payment formula~(\ref{eqn:ic-payment}) in Theorem~\ref{thm:BR-vm-payment} aligns with the definition of liquid welfare in equation~(\ref{eqn:liquid-welfare}) when no bidder receives negative infinite utility, the metrics of revenue and liquid welfare would be the same for truthful auctions, and we will thus only report revenue in our results. Besides, as advertisers may have fluctuating auction performances across time slots, \emph{fairness} should also be considered in order to preserve bidders' willingness to attend the auctions. 
    Using liquid welfare to substitute the traditional valuation function \cite{bezakova2005allocating,chakrabarty2009allocating}, fairness is defined as 
    \vspace{-8pt}
 
    $$\operatorname{fairness}=\min _{i\in [n]} \min \left( \frac{v_i(\boldsymbol{t}')}{R_i}, B_i\right).$$
     \\
	\noindent \textbf{Rank Score Function } We adopt rank scores in the form $f_{i,j}(R)=\alpha _{i,j} \times e^{-\beta R}$, where $\alpha _{i,j}$ is drawn from a rectified normal distribution $N(\mu _i, \sigma_i ^2)$, and $\beta, \mu _i, \sigma_i$ are pre-set parameters. In $f(R)$, $\beta$ is used to adjust the impact of ROI on equivalent bids, and $\alpha$ maintains the rank scores of various bidders in a comparable range. Since items may be discarded when some bidders win excessive items in the proposed auction (Line 13 in Algorithm \ref{alg:dsic-fami}), in order to avoid loss in welfare and revenue, we should give other bidders non-trivial winning opportunities when the supply is sufficient, which is provided by the randomness in $\alpha \sim N(\mu _i, \sigma_i ^2)$. For each automated bidding environment with different setting of bidders and items, we set rank score function parameters to be the same for bidders following the same distribution, and choose the parameters with better revenue in this environment.
	
	\subsection{Experimental Results}
	\begin{figure*}[ht]
		\centering
		\begin{subfigure}[t]{0.24\textwidth}
			\hspace{-5pt}
			\centering
			\includegraphics[width=1\linewidth]{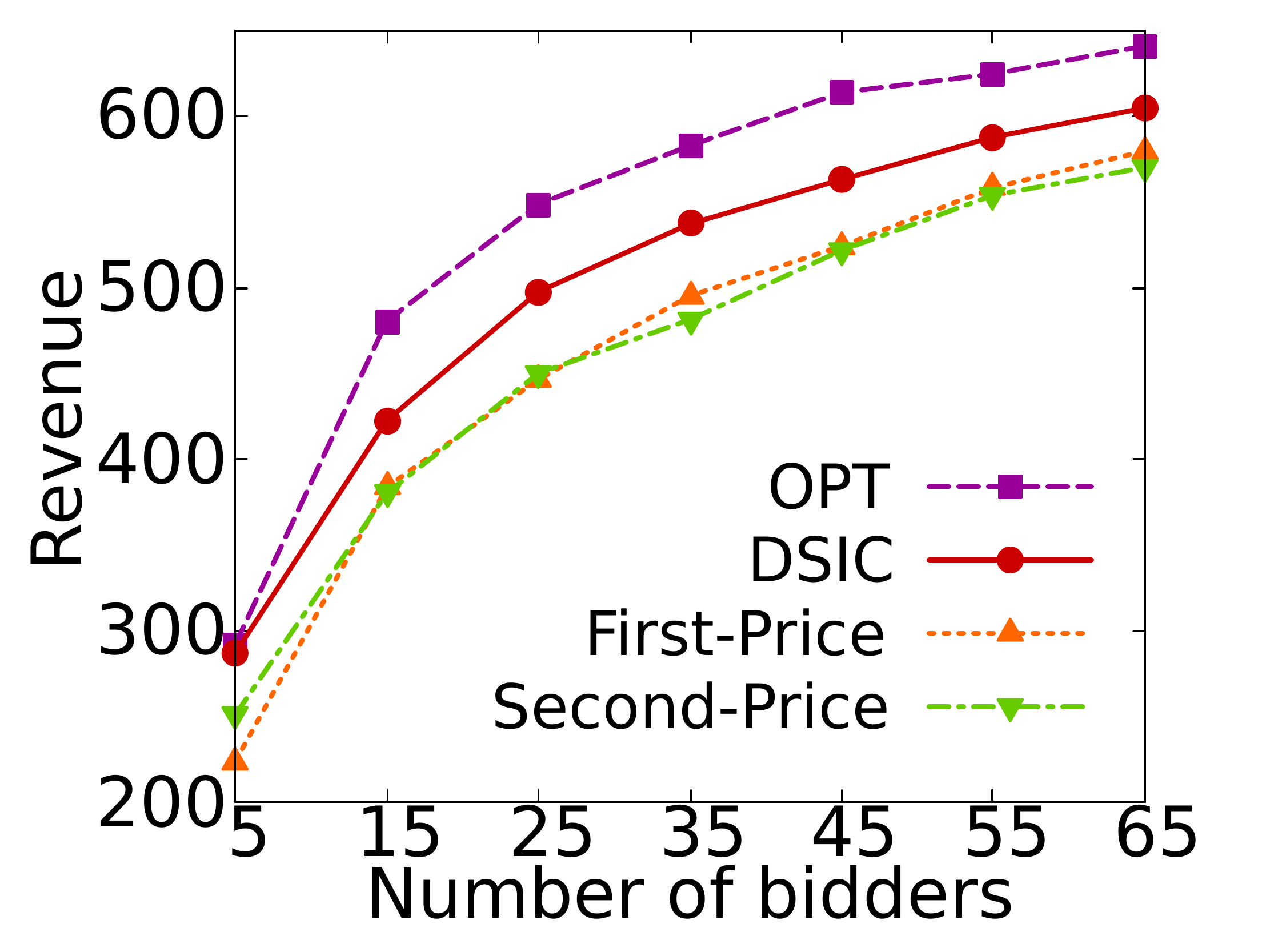}
			\caption{}
			\label{fig:bnum-vary}
		\end{subfigure}
		\begin{subfigure}[t]{0.24\textwidth}
			\centering
			\includegraphics[width=1\linewidth]{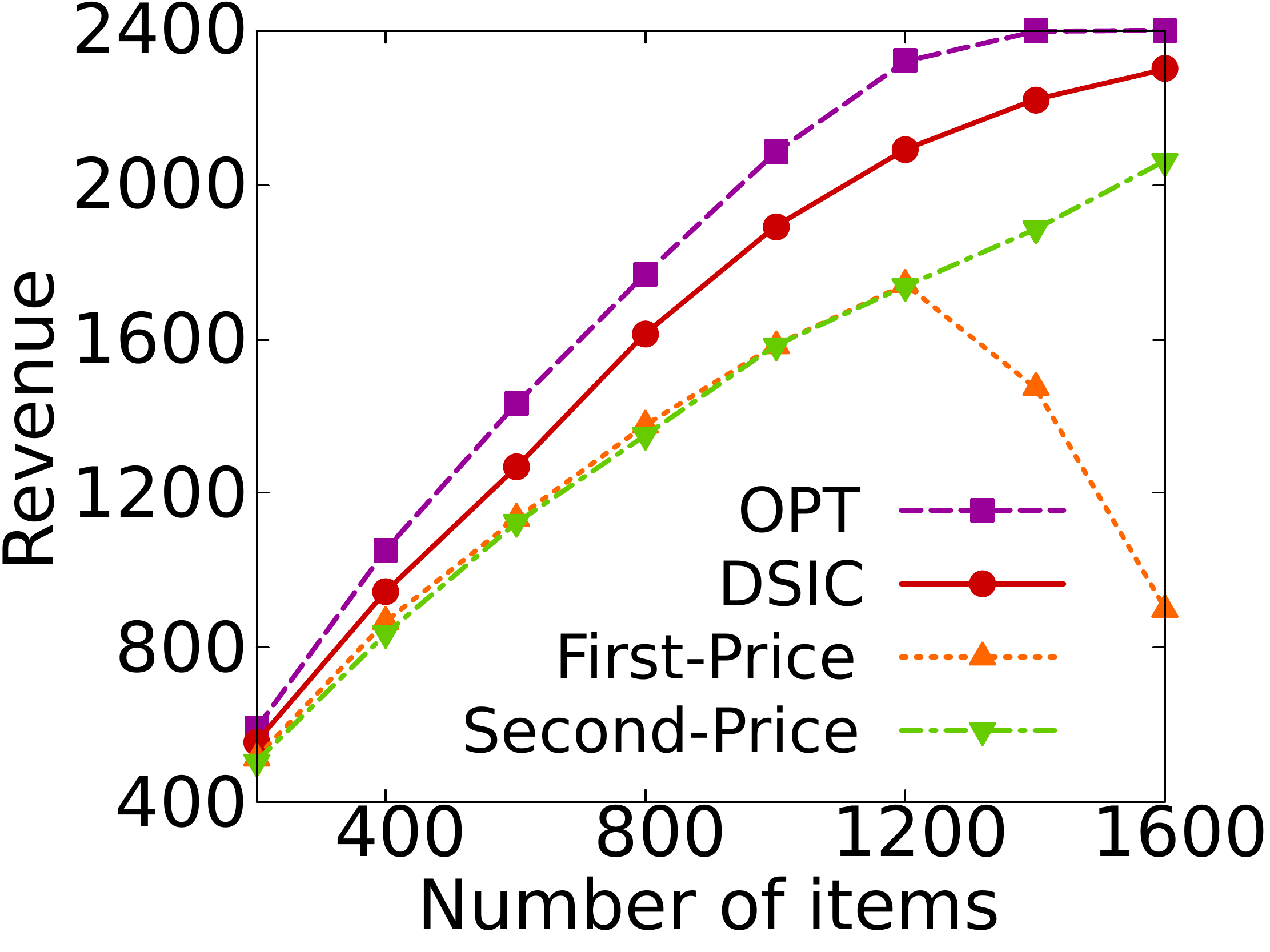}
			\caption{}
			\label{fig:inum-vary}
		\end{subfigure}
		\centering
		\begin{subfigure}[t]{0.24\textwidth}
			\centering
			\includegraphics[width=1\linewidth]{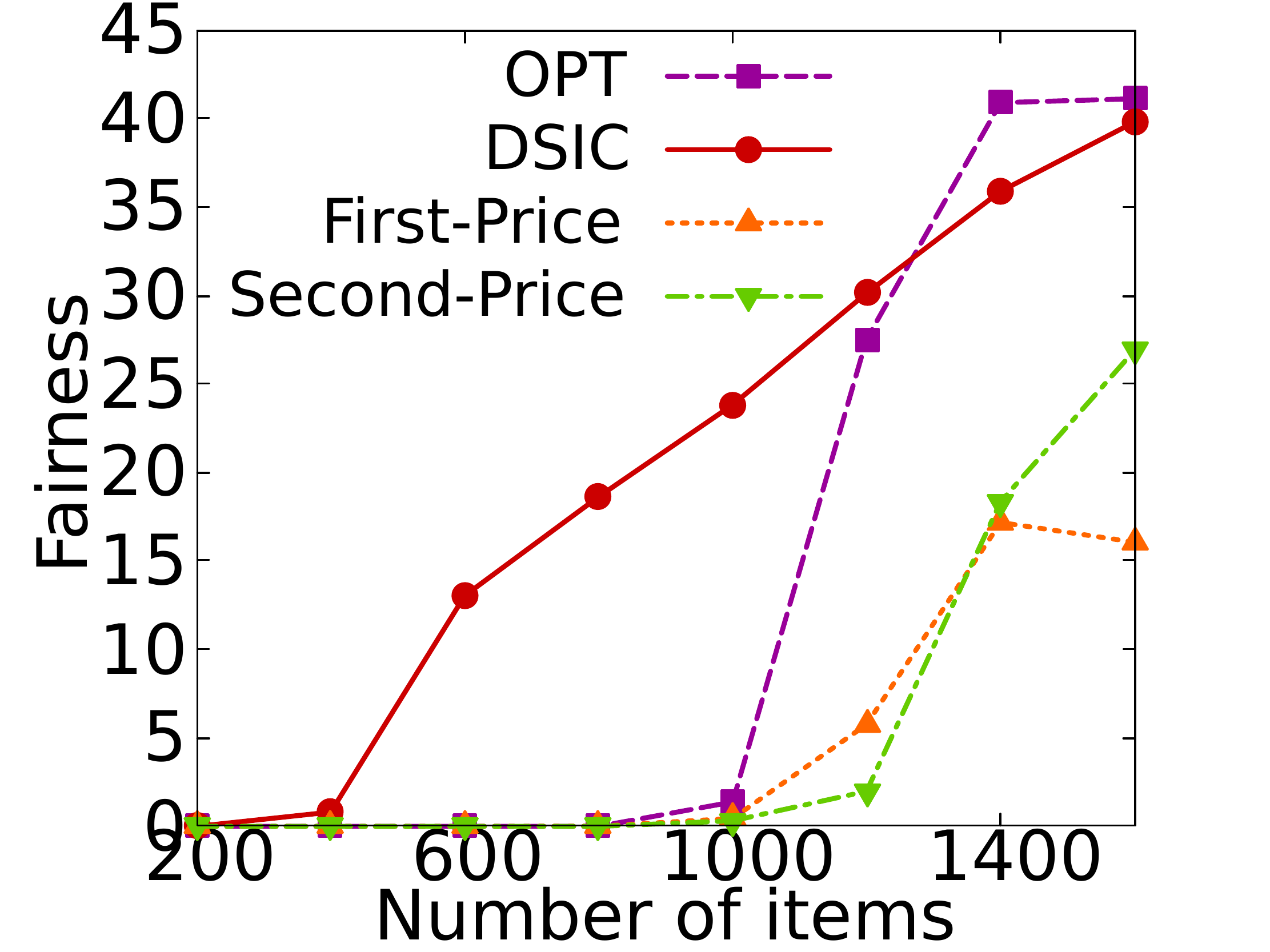}
			\caption{}
			\label{fig:inum-fair}
		\end{subfigure}
		\begin{subfigure}[t]{0.24\textwidth}
			\hspace{-5pt}
			\centering
			\includegraphics[width=1\linewidth]{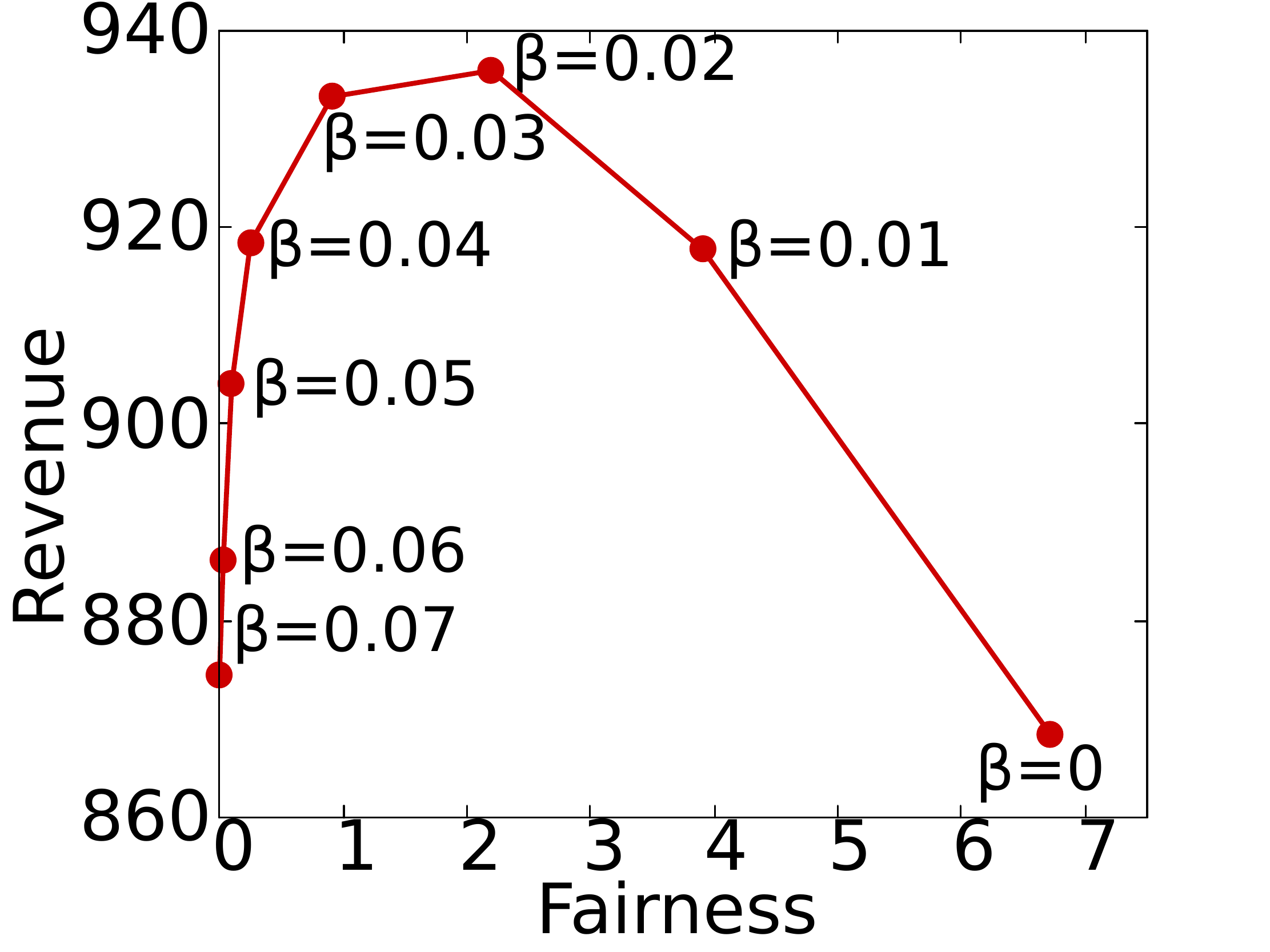}
			\caption{}
			\label{fig:fair-trade}
		\end{subfigure}
		\caption{Revenue and fairness in different experiment setting for symmetric bidders: (a) Revenue for different number of i.i.d bidders with 200 items; (b) Revenue for different number of items with 40 i.i.d bidders; (c) Fairness for different number of items with 40 i.i.d bidders; (d) Fairness of our DSIC auction with 40 i.i.d bidders and 400 items using different parameter $\beta$ in rank score functions;}
	\end{figure*}
	
	\begin{figure}[t]
		\centering
		\includegraphics[width=0.7\linewidth]{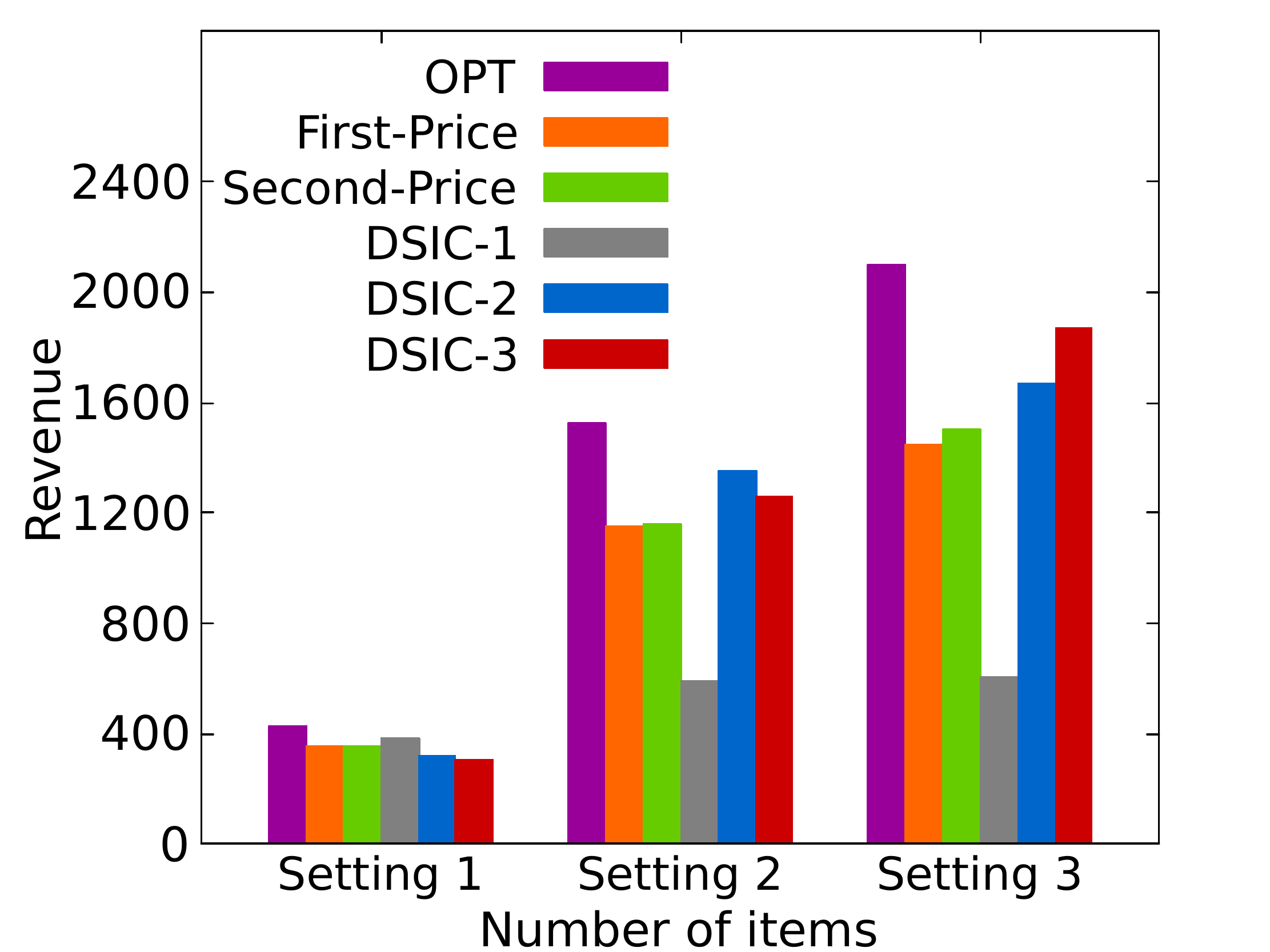}
		\caption{Revenue with 40 mixed bidders in Setting 1: 200 items; Setting 2: 1000 items; Setting 3: 1600 items; with three different groups of rank score functions for our DSIC auction}
		\label{fig:non-iid}
	\end{figure}	
	
	\noindent \textbf{Symmetric bidders }
	The revenue of our truthful auction (referred as DSIC) and the baseline auctions with different number of bidders is reported in Figure \ref{fig:bnum-vary}, and revenue when number of items changes is reported in Figure \ref{fig:inum-vary}. 
	Our truthful auction achieves better revenue than first-price and second-price auctions, and generally achieves more than 90\% revenue of the near optimal baseline. Remarkably, first-price auctions displays evident decrease in revenue when the items are excessive, which also leads to decreases in fairness (Figure \ref{fig:inum-fair}). 
    This is because the bidders exceedingly increase their reported ROI when facing some unsold items with low prices, leading all their payment to decrease proportionally. 
    In contrast, for second-price auctions, the payment of unsold items only consume a negligible part of budget when computing best-response ROI, and the reported ROI of bidders maintains relatively small.
	
	\par In Figure~\ref{fig:inum-fair}, we report the fairness results under different number of items a fixed number of 40 bidders as in Figure \ref{fig:inum-vary}. As designed in the rank score function, our auction does not allocate excessive items to a certain bidder to avoid waste of items, and thus achieves better fairness then the optimal baseline when the number of items is in the range of $600$ to $1200$. 
	When items are in shortage ($200/400$ items in Figure \ref{fig:inum-fair}), our truthful auction harms the fairness to some extent due to the pursuit of revenue. We can control the tradeoff between fairness and revenue with different rank score functions through the parameter $\beta$. Figure \ref{fig:fair-trade} reports a set of revenue and fairness performances with different $\beta$ in the same setting as $400$ items in Figure \ref{fig:inum-fair}. 
	To achieve higher revenue, it is inevitable to have discriminating allocation among different bidders and harms the fairness when items are not abundant. \\
	\noindent \textbf{Mixed bidders }
	Instead of presenting trends similar to i.i.d. settings, we demonstrate the importance of choosing appropriate rank score functions in non-i.i.d. settings. We choose three groups of rank score functions respectively with good performance in three typical settings: 40 mixed bidders with 200/1000/1600 items, and test their performances in the other two settings. The results are presented in Figure \ref{fig:non-iid}, where DSIC-$n$ represents the truthful auction with rank score functions performing well in Setting $n$. The DSIC-$n$ mechanism only perform well in Setting $n$, and the reason should come from the distinct allocation approaches needed to achieve high revenue for different number of items. For example, to achieve high revenue when items are in shortage (Setting 1), DSIC-1 auction allocates most items to the bidders with high value and low ROI, leading to allocating excessive items to these bidders and neglecting other bidders in Setting 2 and 3. 
	
	\section{Related Work}
	\par Auctions in online advertising has attracted much attention from game theory field. Initiated by \cite{varian2007position,edelman2007internet}, the early work studied the properties of keyword auctions in sponsored search, where bidders set a fixed bid for all the queries with a certain keyword. An extensive survey on this topic can be found at \cite{qin2015sponsored}. 	
         \par Due to the unique features compared to traditional ad auctions, auction design and game theoretical analysis for automated bidding come into researchers' view in recent years. Starting from \cite{aggarwal2019autobidding}, a line of work takes the uniform pacing assumption to analyse the equilibrium and efficiency between automated bidding agents, where the agent applies a same pacing parameter to the true value of each item as their bids. \cite{deng2021towards,balseiro2021robust} studied the usage of boosting and reserve price to increase the efficiency and revenue of auto-bidding auctions. \cite{chen2021complexity,conitzer2021multiplicative,conitzer2022pacing} analysed the existence and computational complexity of pacing equilibrium under different settings.  \cite{mehta2022auction} designed a truthful auction with randomization to improve the price of anarchy in auto-bidding settings. \cite{deng2022efficiency} studied the efficiency of first-price auctions under a more generalized setting with no uniform pacing assumption, and \cite{liaw2022efficiency} designed a non-truthful auction to further improve the price of anarchy. Some recent work assumed agents adopting regret-minimizing learning algorithms to adjust the bid or pacing parameter, and analysed the convergence as well as game properties \cite{gaitonde2022budget,fikioris2022liquid,kolumbus2022auctions}.

        \par Our work could be classified as considering the incentive of bidders when setting automated bidding parameters. Under this topic, \cite{li2020incentive} provided distributed bidding algorithms for IC issues assuming public values and private ROI. \cite{balseiro2021landscape} studied the optimal mechanism design for bidders with ROI constraint under different information structure. \cite{kolumbus2021and} considered the incentive of parameter setting for agents in general games. As one of the most relevant work, \cite{DBLP:conf/sigecom/BalseiroDMMZ22} considered the mechanism design with private ROI constraint and public budget, and derived the optimal auction for all the two-bidder cases and some specific multi-bidder cases. Distinct from the above work, our work considers both budget and ROI as private constraints, which aligns with the real automated bidding and reveals the complex value grouping phenomenon for the first time.
        \par Another closely relevant line of work is mechanism design for financially-constrained bidders. Since \cite{laffont1996optimal}, intensive research has been conducted for bidders with budget constraint and progress gradually \cite{borgs2005multi,bhattacharya2010budget,dobzinski2012multi,pai2014optimal}.        \cite{szymanski2006impact} analysed the impact of ROI constraints on bidding and revenue of several common auction forms, and
        \cite{golrezaei2021auction} firstly considered auction design for bidders with ROI constraints. 
        Differing from those existing work studying quasi-linear bidders with private values and one financial constraint, our work consider value-maximizing bidders with multiple private constraints in the automated bidding environment.	
        \par From the computational perspective of multi-dimensional mechanism design, the representing series of work \cite{cai2012optimal,alaei2012bayesian,cai2021efficient} provided a black-box linear programming framework to the classical multi-item auction design problem under Bayesian Incentive Compatibility (BIC). However, due to the ``if, then'' structure in our truthfulness conditions (Theorem 3.5) and the resulting value grouping phenomenon, those existing linear programming techniques could not be simply adapted to our problem.
	
	\section{Conclusion}
	In this work, we have considered truthful auto-bidding auction design for bidders with private budget and ROI constraints across multiple impressions, while the values of impressions are public to the auctioneer. We have characterized the truthfulness conditions in this new auto-bidding auction model, which involves irregular grouping constraints on bidders' cumulative utilities. We have proposed a series of simple truthful auction mechanisms with flexible rank score functions as a solution to this automated bidding auction design problem. The experimental results validate the performances and flexibility of the proposed auction mechanisms.

\bibliographystyle{named}
\bibliography{ijcai23}

\newpage
\appendix

\section{Detailed Proofs for Results}
 During the proofs of results, for ease of notations, we may drop the term of other bidders' reported types $\boldsymbol{t}_{-i}$ and the subscript $i$ in $u$, $v$, and $p$, \emph{e.g.}, $v_i((B_i, R_i), \boldsymbol{t}_{-i})$ to $v(B_i, R_i)$ when the context is clear. With dropped $\boldsymbol{t}_{-i}$ and $i$, we imply that all the mentioned statements should hold for every possible $\boldsymbol{t}_{-i}\in \mathcal{T}_{-i}$ for bidder $i$. We omit the proofs of Theorem 3.1, 3.2 and 3.7 since we have detailedly explained how to derive them in the main text.
 
\subsection{Proof of Theorem 3.3}

\par To help prove Theorem 3.3, we consider the following Lemma \ref{lm:IC-BR-2}, which provides additional insight on the structure of feasible allocations. It describes the following fact: if some type $A$ has both a higher budget and higher ROI than another type $B$ (indicating we cannot simply distinguish who would have higher utility from their types), and they turn out to share the same cumulative value, then the corresponding lower utility diagonal type, \emph{i.e.}, the type who has the lower budget and higher ROI, must have the same cumulative value. 
	
	\begin{lemma} \label{lm:IC-BR-2}
			For any auction satisfying Theorem 3.1 and 3.2, $$\text{if }v_i((B_i', R_i), \boldsymbol{t}_{-i})=v_i((B_i, R_i'), \boldsymbol{t}_{-i})=c$$ for some $c\geq 0, i\in [n]$, $\boldsymbol{t}_{-i}\in \mathcal{T}_{-i}$, $B_i'>B_i$ and $R_i'<R_i$, then $$v_i((B_i, R_i), \boldsymbol{t}_{-i})=c.$$
		
		\end{lemma}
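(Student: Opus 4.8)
The plan is to argue by contradiction, combining the two ``if--then'' payment conditions of Theorems 3.1 and 3.2 with the individual-rationality (IR) requirements that are part of truthfulness throughout the paper. Following the appendix convention I suppress $\boldsymbol{t}_{-i}$ and the subscript $i$, writing $v(B,R)$, $p(B,R)$, $\operatorname{ROI}(B,R)$. Monotonicity (condition (1) of Theorem 3.1, or of Theorem 3.2) already gives $v(B_i,R_i)\le v(B_i',R_i)=c$, so it remains only to exclude $v(B_i,R_i)<c$; and if $c=0$ then $v(B_i,R_i)=0=c$ trivially, so I assume $c>0$ and suppose for contradiction that $v(B_i,R_i)<c$.

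First I would convert the strict value gaps into lower bounds on two payments. Since $v(B_i',R_i)=c>v(B_i,R_i)$ with $B_i'>B_i$ at the common ROI $R_i$, condition (2) of Theorem 3.1 (applied to the pair $B_i<B_i'$) gives $p(B_i',R_i)>B_i$. Symmetrically, since $v(B_i,R_i')=c>v(B_i,R_i)$ with $R_i'<R_i$ at the common budget $B_i$, condition (2) of Theorem 3.2 gives $\operatorname{ROI}(B_i,R_i')<R_i$; as $v(B_i,R_i')=c>0$ this means $p(B_i,R_i')>0$ and $\operatorname{ROI}(B_i,R_i')=c/p(B_i,R_i')$, hence $p(B_i,R_i')>c/R_i$. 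Then I would use IR at these same two reports for matching upper bounds: IR at $(B_i',R_i)$ gives $\operatorname{ROI}(B_i',R_i)\ge R_i$, hence $p(B_i',R_i)\le v(B_i',R_i)/R_i=c/R_i$; and IR at $(B_i,R_i')$ gives $p(B_i,R_i')\le B_i$. Chaining the four inequalities,
\[
B_i<p(B_i',R_i)\le \frac{c}{R_i}<p(B_i,R_i')\le B_i,
\]
so $B_i<B_i$ --- a contradiction. Therefore $v(B_i,R_i)=c$.

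The step I expect to be the real obstacle is realizing that IR is indispensable: monotonicity together with the two payment conditions does not by itself force $v(B_i,R_i)=c$ (one can exhibit a value/payment pair satisfying Theorems 3.1 and 3.2 in isolation yet with $v(B_i,R_i)<c$). The argument closes only because IR lets the two payment lower bounds --- $p(B_i',R_i)>B_i$ from the budget side and $p(B_i,R_i')>c/R_i$ from the ROI side --- be compared through the single scalar $c/R_i$ versus $B_i$, equivalently $c$ versus $B_i\times R_i$, the quantity that governs the structural results in Section 3. Beyond identifying this, only routine bookkeeping of the degenerate cases ($c=0$, and $p=0$ yielding $\operatorname{ROI}=+\infty$) remains, and both are harmless.
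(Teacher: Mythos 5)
Your proof is correct and follows essentially the same route as the paper's: both derive the same four facts ($p(B_i',R_i)>B_i$ and $\operatorname{ROI}(B_i',R_i)\geq R_i$ from the budget side, $\operatorname{ROI}(B_i,R_i')<R_i$ and $p(B_i,R_i')\leq B_i$ from the ROI side) and combine them into a contradiction. The only cosmetic difference is that the paper multiplies the payment and ROI chains to conclude $v(B_i,R_i')<v(B_i',R_i)$, whereas you divide by $R_i$ to chain the payments through $c/R_i$; you also handle the degenerate cases $c=0$ and $p=0$ more explicitly.
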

	\begin{proof} 
		By Theorem 3.1 and 3.2, $v_i(B_i, R_i)\leq c$. Suppose $v_i(B_i, R_i)<c$, then $v(B_i',R_i)>v(B_i, R_i)$, indicating $p(B_i', R_i)> B_i$ and $\operatorname{ROI}(B_i', R_i)\geq R_i$. Similarly, $v(B_i, R_i')>v(B_i, R_i)$, indicating $\operatorname{ROI}(B_i, R_i')<R_i$ and $p(B_i, R_i')\leq B_i$. Thus, we have $p(B_i, R_i')\leq B_i < p(B_i', R_i)$ and $\operatorname{ROI}(B_i, R_i') <R_i \leq \operatorname{ROI}(B_i', R_i)$. As $v_i=p_i\times \operatorname{ROI}_i$
		, $v(B_i, R_i')<v(B_i', R_i)$, leading to a contradiction.
	\end{proof}

	\begin{proof} 
		(Theorem 3.3) The only if direction directly follows from Theorem 3.1 and 3.2. For the if direction, we divide the possible misreporting into four categories according to their relationship with the true type $t_i=(B_i. R_i)$, and verify none of them could achieve higher utility when conditions in Theorem 3.1 and 3.2 hold. Suppose the bidder misreports $(B_i', R_i')$. The cases $B_i'=B_i$ or $R_i'=R_i$ are covered directly.\\ 
		\noindent $\bullet$ $B_i'>B_i$ and $R_i'>R_i$. As $R_i'>R_i$, $v(B_i, R_i)\geq v(B_i, R_i')$. If $v(B_i', R_i')>v(B_i, R_i)$, then $v(B_i', R_i')>v(B_i, R_i')$, indicating $p(B_i', R_i')$ $>B_i$. Therefore, if misreporting could bring higher cumulative value, then the payment would break the bidder's budget constraint. \\
		\noindent $\bullet$ $B_i'>B_i$ and $R_i'<R_i$. We would have $v(B_i', R_i')\geq v(B_i, R_i')\geq v(B_i, R_i)$, and $v(B_i', R_i')\geq v(B_i', R_i)\geq v(B_i, R_i)$. If $v(B_i', R_i')> v(B_i, R_i)$, at least one inequality strictly holds respectively in the above two inequality formulas. If $v(B_i', R_i')>v(B_i, R_i')$, then $p(B_i', R_i')>B_i$, which breaks the bidder's budget constraint. Thus, $v(B_i', R_i')= v(B_i, R_i')> v(B_i, R_i)$ should hold. Similarly, $v(B_i', R_i')= v(B_i', R_i)> v(B_i, R_i)$. We then have $v(B_i', R_i')=v(B_i', R_i)=v(B_i, R_i')>v(B_i, R_i)$, which contradicts with Lemma \ref{lm:IC-BR-2}. \\
		\noindent $\bullet$ $B_i'<B_i$ and $R_i'>R_i$. According to condition (1), the bidder gets lower allocated value. \\
		\noindent $\bullet$ $B_i'<B_i$ and $R_i'<R_i$. This case is symmetric to $B_i'>B_i$ and $R_i'>R_i$, and we omit the proof here.
	\end{proof}

\subsection{Proof of Theorem 3.4}
	\begin{proof}
		Since $\mathcal{A}$ is feasible, there exists corresponding $\mathcal{P}'$ that results in DSIC and IR mechanism $(\mathcal{A}, \mathcal{P}')$. Notice $p_i(B_i,R_i)=\min \left( v_i(B_i,R_i)/R_i, B_i \right)$ is the possible largest payment that can be extracted from bidder $i$ with type $(B_i,R_i)$ to satisfy the bidder's ROI and budget requirement. That is to say, $p_i'(B_i,R_i)\leq p_i(B_i,R_i)$. On the other hand, increasing payment of a type will not break originally satisfied conditions in Theorem 3.1 and 3.2. As modifying from $p_i'$ to $p_i(B_i,R_i)=\min \left( v_i(B_i,R_i)/R_i, B_i \right)$ satisfies the budget/ROI requirement of the bidder and maintaining the DSIC conditions, $(\mathcal{A}, \mathcal{P})$ is a truthful mechanism.
		\end{proof}

\subsection{Proof of Theorem 3.5}
	\begin{proof}
		The if direction could be verified through applying Theorem 3.3 and 3.4. For the only if direction, condition (1) directly follows from Theorem 3.3. To show condition (2), if $v(B_i, R_i)< B_i R_i$, then $p(B_i, R_i) \leq v(B_i, R_i)/R_i < B_i$, leading to contradiction with Theorem 3.3 when $B_i'\to B_i^{-}$ (denoting some $B_i'<B_i$ infinitely close to $B_i$). The analysis is similar for condition (3).
	\end{proof}

\subsection{Proof of Theorem 3.6}
	\begin{proof}
		  When $\operatorname{thr}_i(B_i)=0$, the statement is vacuously true. When $\operatorname{thr}_i(B_i)>0$, $$\lim _{R\to \operatorname{thr}_i(B_i)^-}v_i(B_i, R) \geq \operatorname{thr}_i(B_i)\times B_i$$ from definition of $\operatorname{thr}$ function. By the monotone property, for any $R_i<\operatorname{thr}_i(B_i)$, $v_i(B_i, R_i) > B_i\times R_i $, indicating $v_i(B_i, R_i)=\lim _{R\to R_i^+}v_i(B_i, R)$. As a result, $v_i(B_i, R_i)=v_i(B_i, \operatorname{thr}_i(B_i))$ for any $R_i\leq  \operatorname{thr}_i(B_i)$. If $v_i(B_i, \operatorname{thr}_i(B_i))>\operatorname{thr}_i(B_i)\times B_i$, by Theorem 3.5, $$\lim _{R\to \operatorname{thr}_i(B_i)^+} v_i(B_i, R) =v_i(B_i, \operatorname{thr}_i(B_i)) >\operatorname{thr}_i(B_i)\times B_i.$$ If $v_i(B_i, \operatorname{thr}_i(B_i)) < \operatorname{thr}_i(B_i)\times B_i$, there would exist some $R\to \operatorname{thr}_i(B_i)^-$ whose $v_i(B_i, R)<B_i \times R$. Both cases contradicts of the definition of supremum in $\operatorname{thr}$ function. Thus, $v_i(B_i, \operatorname{thr}_i(B_i)) = \operatorname{thr}_i(B_i)\times B_i$.
	\end{proof}

\subsection{Proof of Corollary 3.8}
	\begin{proof}
		We consider the map $m$ as follows.
		\begin{equation*}
			(m \circ g) (B, R)=\left\{
			\begin{aligned}
				&g(B) \text{\quad \quad \quad \quad \quad \quad \quad \quad if } R\leq \frac{g(B)}{B}& \\
				& g\left(\sup_{B'\leq B} \{R\leq \frac{g(B')}{B'}\} \right) \text{\quad otherwise,}
			\end{aligned}
			\right.
		\end{equation*}
		where we define $\sup \varnothing = 0$. Given a non-decreasing function $g$, we could derive the corresponding unique cumulative value function $v$ based on the construction of $m$, and the constructed cumulative value function must be feasible since every type satisfies conditions in Theorem 3.5. Inversely, given any possible cumulative value function $v_i$, we can find the unique threshold ROI function by definition
  $$\operatorname{thr}_i(B, \boldsymbol{t}_{-i})=\sup _{R \in \mathcal{R}_i} \left( v_i\left((B, R), \boldsymbol{t_{-i}}\right)\geq B\times R \right),$$ and compute the corresponding  $$g_i(B_i, \boldsymbol{t}_{-i}) =\operatorname{thr}_i(B_i, \boldsymbol{t}_{-i})\times B_i.$$ Due to the the truthfulness conditions in Theorem 3.5, such derived $g_i(B_i, \boldsymbol{t}_{-i})$ must be non-decreasing in $B_i$ since the cumulative value when $R$ is infinitely close to $0$ should be non-decreasing for different $B_i$.
	\end{proof}

 \subsection{Proof of Theorem 4.1}
 \begin{proof}
 We provide the proof through figuring out the function $g(B)$ for each bidder $i$ when $\boldsymbol{t}_{-i}$ is fixed. Fixing bidder $i$, given $c_j$ and $v_{i,j}$, we rank the items by $r_{i,j}=f_{i,j}^{-1}(\frac{c_j}{v_{i,j}})$ in an decreasing order. Without loss of generality, we renumber these items from $1$ to $m$ with $r_i^1 > r_i^2 > \ldots > r_i^{m}$ and denote the corresponding value as $v_i^j$. 
\par We would then verify that the allocation assigned by our algorithm aligns with the cumulative values implied by following function $g(B)$ with $p = 1,2,\ldots, m-1$. The corresponding threshold ROI function is $\operatorname{thr}(B)=g(B)/B$.
\begin{equation*}
g(B)=\left\{
    \begin{aligned}
    & r_{i}^1 \cdot B \text{\quad \quad \quad \quad \quad \quad \quad \quad \quad \quad if } B< \frac{v_i^1}{r_{i}^1},& \\
    & \sum _{j=1} ^{p} v_{i}^j \text{\quad \quad \quad if } B \in [\frac{\sum _{j=1}^p v_{i}^j}{r_{i}^p}, \frac{\sum _{j=1}^p v_{i}^j}{r_{i}^{p+1}}), & \\
    & r_{i}^{p+1} \cdot B \text{\quad \quad if } B \in [\frac{\sum _{j=1}^p v_{i}^j}{r_{i}^{p+1}}, \frac{\sum _{j=1}^{p+1} v_{i}^j}{r_{i}^{p+1}}), & \\
	& \sum _{j=1} ^{m} v_i^{j} \text{\quad \quad \quad \quad \quad \quad \quad if } B \geq \frac{\sum _{j=1} ^{m} v_i^{j} }{r_{i}^{m}}.&
	\end{aligned}
	\right.
\end{equation*}
        \par For convenience in notations, we use $A_i(R)$ to denote the winning set $A_i$ containing all items the bidder has the highest virtual bid when she reports $R$ (Line 4-6), namely all the items with $r_i^j\geq R$, and use $A_i(0)$ to denote the whole set of items. We use $R_i^c(B;A_i(R))$ to denote the critical ROI $R_i^c$ computed after Line 9 when bidder $i$ reports $(B,R)$, and use $d(B;A_i(R))$ to represent the corresponding $d$.
	\par We would show the following properties to finish the proof. 
 \begin{enumerate}[(a)]
     \item $R_i^c(B;A_i(0))$ aligns with the threshold ROI function indicated by $g(B)$;
     \item When $R\leq R_i^c(B;A_i(0))$, $$ R_i^c(B;A_i(0))=R_i^c(B;A_i(R)),$$ $$d(B;A_i(0))=d(B;A_i(R)),$$ and when $R>  R_i^c(B;A_i(0))$, $R>R_i^c(B;A_i(R))$. 
     \item When $R\leq R_i^c(B;A_i(R))$, $R\leq g(B)$ and the cumulative values assigned to the bidder align with those defined by $g(B)$;
     \item When $R>R_i^c(B;A_i(R))$, $R> g(B)$ and the bidder gets all the items in $A_i(R)$;
     \item The corresponding allocation indicated by (d) align with those indicated by $g(B)$.
 \end{enumerate}
 
        \par To show property (a), we analyse the detailed calculation in Line 9.
 The calculation in Line 9 exactly locates budget $B$ in the segmentation of above function $g(B)$. The left-hand side of Line 9 is strictly increasing when $R_i^c$ decreases, and is continuous almost everywhere except at points $R_i^c=\{r_{i}^j\}_{j\in [m]}$. In detail, when $R_i^c$ decreases from $(r_i^{p+1})^+$ to $(r_i^{p+1})$, the left-hand side of Line 9 jumps from $B=\left(\frac{\sum _{j=1}^p v_{i}^j}{r_{i}^{p+1}}\right)^-$ to $\frac{\sum _{j=1}^{p+1} v_{i}^j}{r_{i}^{p+1}}$, where $(\cdot)^+$ or $(\cdot)^-$ represents the one-sided limits. Therefore, we have
{  \small
\begin{equation*}
R_i^c(B;A_i(0))=\left\{
    \begin{aligned}
    & r_{i}^1  \text{\quad \quad \quad \quad \quad \quad \quad \quad \quad \quad \quad \quad \quad if } B< \frac{v_i^1}{r_{i}^1},& \\
    & \left(\sum _{j=1} ^{p} v_{i}^j\right)/B \text{\quad if } B \in [\frac{\sum _{j=1}^p v_{i}^j}{r_{i}^p}, \frac{\sum _{j=1}^p v_{i}^j}{r_{i}^{p+1}}), & \\
    & r_{i}^{p+1} \text{\quad \quad \quad \quad \quad if } B \in [\frac{\sum _{j=1}^p v_{i}^j}{r_{i}^{p+1}}, \frac{\sum _{j=1}^{p+1} v_{i}^j}{r_{i}^{p+1}}), & \\
	& \left(\sum _{j=1} ^{m} v_i^{j}\right)/B \text{\quad \quad \quad \quad \quad \quad if } B \geq \frac{\sum _{j=1} ^{m} v_i^{j} }{r_{i}^{m}},&
	\end{aligned}
	\right.
\end{equation*}  
}
and
{ \small
\begin{equation*}
d(B;A_i(0))=\left\{
    \begin{aligned}
    & v_i^1/r_{i}^1 - B  \text{\quad \quad \quad \quad \quad \quad \quad \quad \quad \quad if } B< \frac{v_i^1}{r_{i}^1},& \\
    & 0 \text{\quad \quad \quad \quad \quad \quad \quad if } B \in [\frac{\sum _{j=1}^p v_{i}^j}{r_{i}^p}, \frac{\sum _{j=1}^p v_{i}^j}{r_{i}^{p+1}}), & \\
    & \sum _{j=1}^{p+1} v_{i}^j/r_{i}^{p+1} - B \text{\quad if } B \in [\frac{\sum _{j=1}^p v_{i}^j}{r_{i}^{p+1}}, \frac{\sum _{j=1}^{p+1} v_{i}^j}{r_{i}^{p+1}}), & \\
	& 0 \text{\quad \quad \quad \quad \quad \quad \quad \quad \quad \quad \quad if } B \geq \frac{\sum _{j=1} ^{m} v_i^{j} }{r_{i}^{m}}.&
	\end{aligned}
	\right.
\end{equation*} 
}
Comparing $\operatorname{thr}(B)=g(B)/B$ with $R_i^c(B;A_i(0))$ gives us $R_i^c(B;A_i(0))=\operatorname{thr}(B)$, thus proving property (a). 
\par To show property (b), we should notice the calculation of $R_i^c(B; A(R))$ and $d(B; A(R))$ relies on the items with $r_i^j\geq R_i^c(B; A(R))$, and the existence of other items in $A(R)$ would not affect the result according to the way we compute $R_i^c$. This provides the proof for the first half of property (b). To show the second half, note that a larger set $A_i$ could only result in an equal or larger $R_i^c$, and $A_i(R)\subset A_i(0)$.

\par When $R\leq R_i^c(B;A_i(R))$, $R\leq g(B)$ follows from property (b). The algorithm goes into Line 13, thus items with $r_i^j<R_i^c(B;A_i(R))$ would be removed. After the operations to remove items with value $d/R_i^c$ in Line 10-11, the values of remaining items would finally become $\sum _{j, r_i^j\geq R_i^c(B)} v_i^j -d(B)/R_i^c(B)$, where all the set used to compute $R_i^c$ and $d$ here could be regarded as $A_i(0)$ by property (b). It can be verified that the above values align with our defined $g(B)$, thus proving property (c). 

\par When $R> R_i^c(B;A_i(R))$, $R> g(B)$ follows from property (b), and the algorithm would not go into Line 13 to remove parts of the items. This also means $B \geq \frac{\sum _{j=1} ^{n_i} v_i^{j} }{r_{i}^{n_i}}$, where we use $n_i$ to denote the item with smallest $r_i^j$ in $A_i(R)$, and further indicates $d(B;A_i(R))=0$. Therefore, the algorithm would not go into Line 10-11, and no operation about removing the items would be conducted. The bidder would get all items in $A_i(R)$ after Line 6, which proves property (d).

\par By property (d), we naturally find that the cumulative values are the same for types with the same ROI and locate above the threshold ROI function. This structure aligns with the non-increasing threshold ROI function indicated by the defined $g(B)$. Based on Corollary 3.8, in order to prove property (e), we only need to show the type which has the largest budget among all the types with ROI $R$ on the threshold ROI function, \emph{i.e.}, the type $(B^*(R), R)$ with 
$B^*(R)=\sup_{B'} \{B'|\operatorname{thr}(B')=R\}$, would get an cumulative value equal to the valuations of items in $A_i(R)$ (as characterized in property (d)). Since we have shown the cumulative values for types below and on the threshold ROI function are aligned between our algorithm and defined $g(B)$ in properties (a)-(c), we could use the defined $g(B)$ and corresponding $\operatorname{thr}(B)$ to verify the above statement. We could find
\begin{equation*}
B^*(R)=\left\{
    \begin{aligned}
    & \frac{\sum _{j=1}^{p} v_{i}^j}{r_{i}^{p}}  \text{\quad \quad \quad \quad \quad \quad \quad \quad \quad \quad if } R=r_i^p, & \\
    & \sum _{j=1}^{p} v_{i}^j/R \text{\quad \quad \quad \quad \quad \quad \quad if } R \in (r_i^{p+1}, r_i^p), & \\
    & \sum _{j=1}^{m} v_{i}^j/R \text{\quad \quad \quad \quad \quad \quad \quad \quad \quad \quad if } R\leq r_i^{m}, & \\    
	\end{aligned}
	\right.
\end{equation*} 
for $p=1, \ldots, m-1$. We do not define $B^*(R)$ for $R> r_i^1$ since both our algorithm and the defined $g(B)$ determine to not allocate to these types. We could verify $g\left(B^*(R)\right)=\sum _{j, r_i^j\geq R} v_i^j$, thus finish the proof for property (e). 

	\par By the above properties (a)-(e), the cumulative value function indicated by our algorithm aligns with the defined $g(B)$. Since this $g(B)$ is non-decreasing in $B$, the mechanism is truthful and preserves the bidders' financial constraints with the payment in Line 14. Finally, our mechanism does not over-allocate items because each item may only be allocated to the bidder with highest virtual bid (breaking possible ties with arbitrary rules).
\end{proof}

\section{Non-Truthfulness of Repeated First-Price and Second-Price Auctions}
We provide counter-examples to show the non-truthfulness of repeated first-price and second-price auctions. For simplicity of illustration and understanding, we consider small cases with several bidders and items that do not fit with real situations in automated bidding, while the ideas behind the non-truthfulness are the same.
\subsection{Repeated First-Price Auctions}
In repeated first-price auction, the realized ROI of a bidder is exactly her reported ROI. Thus, bidders have no incentive to misreport a smaller ROI. However, reporting a larger ROI can possibly bring the bidder higher utility since it could reduce the payment for per unit gained value if the bidder could still win enough items.
\begin{example}
Consider a simple case with two bidders and two items shown in the following table.
    	\begin{table}[H]
    	\centering
        	\begin{tabular}{l l l l l}
            \toprule
             & Item 1 & Item 2 & ROI & Budget  \\
            \midrule
            Bidder 1  & $v_{1,1}=4$ & $v_{1,2}=4$   & $R_1=2$  & $B_1=3$   \\
            Bidder 2  & $v_{2,1}=1$ & $v_{2,2}=1$   & $R_2=1.5$  & $B_2=6$ \\
            \bottomrule
        	\end{tabular}
    	\end{table}
\noindent Assume bidder 2 reports her real type, and we consider the strategy of bidder 1. If bidder 1 reports her real type, she will only win item 1 at price of 2, leading the remaining budget to be 1, thus cannot afford the second item. However, if bidder 1 reports her ROI higher as $R_1=4$, she could win both items at total price 2 and realized ROI 4, getting a higher utility. 
\end{example}
\subsection{Repeated Second-Price Auctions}
In repeated second-price auctions, both misreporting a larger or smaller ROI might bring the bidder larger utility.  

\begin{example}
Consider a simple case with two bidders and two items shown in the following table.
    	\begin{table}[H]
    	\centering
        	\begin{tabular}{l l l l l l}
            \toprule
             & Item 1 & Item 2 & ROI & Budget  \\
            \midrule
            Bidder 1  & $v_{1,1}=4$ & $v_{1,2}=8$  & $R_1=1$  & $B_1=3$   \\
            Bidder 2  & $v_{2,1}=4$ & $v_{2,2}=4$  & $R_2=1.5$  & $B_2=6$ \\
            \bottomrule
        	\end{tabular}
    	\end{table}
\noindent Assume bidder 2 reports her real type, and we consider the strategy of bidder 1. If bidder 1 reports her real type, she will win item 1 at price $8/3$, leading the remaining budget to be $1/3$, thus cannot afford the remaining higher value item 2. However, if bidder 1 reports her ROI higher as $R_1=2$, then she will lose item 1, and win item 2 at price $8/3$ with realized ROI 3, getting a higher utility.
\end{example}
Reporting a smaller ROI may help the bidder win more items and balance the realized ROI between items (items with realized ROI higher or lower than the real ROI both exist) without breaking her true ROI.
\begin{example}
Consider a simple case with two bidders and two items shown in the following table.
    	\begin{table}[H]
    	\centering
        	\begin{tabular}{l l l l l}
            \toprule
             & Item 1 & Item 2 & ROI & Budget  \\
            \midrule
            Bidder 1  & $v_{1,1}=4$ & $v_{1,2}=3$   & $R_1=2$  & $B_1=6$   \\
            Bidder 2  & $v_{2,1}=1$ & $v_{2,2}=4$   & $R_2=2$  & $B_2=6$ \\
            \bottomrule
        	\end{tabular}
    	\end{table}
\noindent Assume bidder 2 reports her real type, and we consider the strategy of bidder 1. If bidder 1 reports her real type, she will win item 1 at price of 0.5 and lose item 2 since bidder 2 has a higher $v/R$ for it. However, if bidder 1 reports her ROI lower as $R_1=1.2$, then she could win both items at total price $2.5$ and realized ROI 2.8, getting a higher utility.
\end{example}

\end{document}